\newtheorem{theorem}{Theorem}[section] %
\newtheorem{lemma}[theorem]{Lemma}
\newtheorem{proposition}[theorem]{Proposition}
\crefname{lemma}{Lemma}{Lemmas}
\crefname{definition}{Definition}{Definitions}
\crefname{proposition}{Proposition}{Propositions}
\crefname{corollary}{Corollary}{Corollaries}
\crefname{example}{Example}{Examples}
\algrenewcommand\algorithmicrequire{\textbf{Input:}}
\algrenewcommand\algorithmicensure{\textbf{Output:}}
\newcommand\E{\mathbb{E}} 
\newcommand\R{\mathbb{R}}
\newcommand\Z{\mathbb{Z}}
\newcommand\1{\mathds{1}}
\newcommand\N{\mathds{N}}
\def\ddefloop#1{\ifx\ddefloop#1\else\ddef{#1}\expandafter\ddefloop\fi}
\def\ddef#1{\expandafter\def\csname b#1\endcsname{\ensuremath{\boldsymbol{#1}}}}
\def\ddef#1{\expandafter\def\csname #1hat\endcsname{\ensuremath{\widehat{\csname #1\endcsname}}}}
\def\ddef#1{\expandafter\def\csname #1scr\endcsname{\ensuremath{\mathcal{#1}}}}
\def\ddef#1{\expandafter\def\csname #1cal\endcsname{\ensuremath{\mathscr{#1}}}}
\def\ddef#1{\expandafter\def\csname #1hat\endcsname{\ensuremath{\widehat{#1}}}}
\def\ddef#1{\expandafter\def\csname #1hat\endcsname{\ensuremath{\widehat{#1}}}}
\def\ddef#1{\expandafter\def\csname #1bar\endcsname{\ensuremath{\bar{#1}}}}
\def\ddef#1{\expandafter\def\csname #1bar\endcsname{\ensuremath{\bar{#1}}}}
\def\ddef#1{\expandafter\def\csname #1bb\endcsname{\ensuremath{\mathbb{#1}}}}
\newcommand{\xpln}[1]{, \ \text{#1}} %
\DeclareMathOperator*{\amin}{arg\,min}
\DeclarePairedDelimiter\absBase{\lvert}{\rvert}
\newcommand\abs[1]{\absBase*{#1}}
\DeclarePairedDelimiterXPP{\nucnormBase}[1]{}\lVert\rVert{_\mathrm{nuc}}{#1} %
\DeclarePairedDelimiterXPP{\opnormBase}[1]{}\lVert\rVert{_2}{#1} %
\newcommand\opnorm[1]{\opnormBase*{#1}}
\DeclarePairedDelimiterXPP{\FnormBase}[1]{}\lVert\rVert{_{\mathrm{F}}}{#1} %
\newcommand\Fnorm[1]{\FnormBase*{#1}}
\DeclarePairedDelimiterXPP{\enormBase}[1]{}\lVert\rVert{_2}{#1} %
\newcommand\enorm[1]{\enormBase*{#1}}
\DeclarePairedDelimiter\setBase{\{}{\}}
\newcommand\set[1]{\setBase*{#1}}
\newcommand\matb[1]{\begin{bmatrix} #1 \end{bmatrix}}
\DeclareMathOperator{\Tr}{Tr}
\DeclarePairedDelimiter\lrpBase{(}{)}
\newcommand\lrp[1]{\lrpBase*{#1}}
\DeclarePairedDelimiter\lrbBase{[}{]}
\newcommand\lrb[1]{\lrbBase*{#1}}
\DeclarePairedDelimiter\brkBase{\langle}{\rangle} %
\newcommand\brk[1]{\brkBase*{#1}}
\DeclarePairedDelimiterXPP\ExpBase[1]{\E}{[}{]}{}{#1} %
\let\Pr\relax
\DeclarePairedDelimiterXPP\PrBase[1]{\P}{[}{]}{}{#1}
\newcommand\Pr[1]{\PrBase*{#1}}
\DeclarePairedDelimiter\floor{\lfloor}{\rfloor}
\newcommand{\lsim}{\lesssim}
\newcommand{\gsim}{\gtrsim}
\newcommand{\eps}{\varepsilon}
\newcommand{\by}{\times}
\renewcommand*{\P}{\mathbb{P}}
\newcommand{\pseudo}[1]{ {#1}^\dagger}
\newcommand{\inv}[1]{{#1}^{-1}}
\newcommand{\pU}{\pseudo{U}}
\renewcommand\u[1]{^{(#1)}}
\newcommand\uL{\u{L}}
\newcommand\Tf{T_{\mathrm{eff}}}
\renewcommand\Tbar{T} %
\newcommand\Ntot{N_{\mathrm{tot}}}
\newcommand\bigM{\mathscr{M}}
\title{\LARGE \bf
Learning Clusters of Partially Observed Linear Dynamical Systems
}
\author{Maryann Rui and Munther A. Dahleh%
\thanks{This work is an updated and extended version of a paper presented at the 2025 American Control Conference (ACC).}%
\thanks{M. Rui and M. A. Dahleh are with the Department of Electrical Engineering and Computer Science, Massachusetts Institute of Technology.
        {\tt\small mrui@mit.edu, dahleh@mit.edu}}%
}
\begin{document}

\maketitle
\thispagestyle{plain}
\pagestyle{plain}

\begin{abstract}
We study the problem of learning clusters of partially observed linear dynamical systems from multiple input-output trajectories. This setting is particularly relevant when there are limited observations (e.g., short trajectories) from individual data sources, making direct estimation challenging. In such cases, incorporating data from multiple related sources can improve learning.
We propose an estimation algorithm that leverages different data requirements for the tasks of clustering and system identification. First, short impulse responses are estimated from individual trajectories and clustered. Then, refined models for each cluster are jointly estimated using multiple trajectories. 
We establish end-to-end finite sample guarantees for estimating Markov parameters and state space realizations and highlight trade-offs among the number of observed systems, the trajectory lengths, and the complexity of the underlying models.
\end{abstract}

\section{Introduction}
Learning dynamical systems is a fundamental problem in a wide range of fields, such as robotics, healthcare, and finance. 
When data is generated from multiple related systems that have a shared structure, it can be advantageous to leverage data across the multiple sources to learn the common structure and to improve estimates of constituent systems. 
For instance, different machines may operate in only three distinct modes, or the response of a health indicator may fall into one of just five characteristic regimes, and these underlying models may be jointly learned. 
In this paper, we focus on the versatile setting of clusters of input-output models of partially observed linear systems, and study how our ability to estimate systems can improve with not only the length of observed trajectories but also the number of systems available for learning.
In particular, we provide and analyze an algorithm for learning clusters of linear systems that distinguishes between the amount of data needed, in terms of trajectory lengths, to cluster systems versus to estimate state space realizations of the systems. 

\subsection{Model}
To begin, suppose we observe trajectories from $\Ntot$ linear time-invariant (LTI) systems. Each system $i\in [\Ntot]$ has a finite order minimal realization $M_i = (C_i, A_i, B_i, D_i)$, with dynamics given by
\begin{align}\label{eq:st-sp-model}
x_{i, t+1} &= A_i x_{i,t} + B_i u_{i,t} + w^{(1)}_{i,t} \\
y_{i,t} &= C_i x_{i,t} + D_i u_{i,t} + w^{(2)}_{i,t}
\end{align}
for all $t \in \Z$. At each time $t$, $x_{i,t} \in \R^n$ is the state, $u_{i,t}\in \R^m$ is the input, $y_{i,t} \in \R^p$ is the output, and $w^{(1)}_{it}\in \R^n$ and $w^{(2)}_{it} \in \R^p$ are the process noise and measurement noise, respectively.  $A_i \in \R^{n \by n}$ is the state transition matrix, $B_i \in \R^{m\by n}$ the control matrix, $C_i\in \R^{p\by n}$ the measurement matrix, and $D_i \in \R^{p\by m}$ the feedthrough matrix, with $n$ as the state dimension, $m$ the input dimension, and $p$ the output dimension. 
When we write that a system or model is given by a realization $(C, A, B, D)$, that system corresponds to the equivalence class generated by the similarity transformations $(CQ^{-1}, QAQ^{-1}, QB, D)$ of the realization, for any invertible $n\by n$ matrix $Q$. 
For each system $i \in [\Ntot]$, we observe a length $\Tbar$ trajectory of inputs and outputs, $\set{u_{i,0}, (u_{i,t}, y_{i,t}) \mid t = 1, \dots, \Tbar}$.

\paragraph{Clusters of Linear Systems}
We now introduce the latent clustered structure shared across systems. 
Define the distance between two models $M_1$ and $M_2$ in terms of the distance between their impulse response sequences, $
	d(M_1, M_2)^2 = \sum_{t=0}^{\infty} \Fnorm{C_1A_1^tB_1 - C_2 A_2^t B_2}^2$,
where $\Fnorm{\cdot}$ is the Frobenius norm.

In our cluster model, we assume that the model $M_i$ of each system $i \in [\Ntot]$ can be well-approximated by one of $K$ different LTI models, the set of which we denote $\bigM = \set{M_{(k)}=(C_{(k)}, A_{(k)}, B_{(k)}, D_{(k)}) : k \in [K]}$. 
That is, if $\kappa_i\in [K]$ denotes the true cluster assignment of each system $i\in [\Ntot]$, when $\kappa_i = k$, the distance $d(M_1, M_{(k)})<\delta_{(k)}$, where $\delta_{(k)}\geq 0$ is small and represents the cluster width. 
While our analysis focuses on the case of $\delta_{(k)}  = 0$, which is similar to a mixture model, we demonstrate the robustness of the algorithm to nonzero cluster widths in simulations. %
Note we use the term ``system'' to refer to each of the $\Ntot$ sources of data, and the terms ``model'' or ``cluster'' one of the $K$ underlying dynamics that describe each system.

\paragraph{Problem Statement}
Given input-output data $\set{(u_{i,0}, (u_{i,t}, y_{i,t}) \mid i \in [\Ntot], t \in [T])}$, we wish to learn either impulse response or state space models that capture the dynamics generating each observed trajectory $i \in [\Ntot]$. Assuming our cluster model, we content ourselves with estimating the $K$ models in $\bigM$ and the cluster assignments $\kappa_i$ for each $i \in [\Ntot]$. 
\subsection{Overview of Contributions}
Our goal is to estimate state space models for each observed system, while taking advantage of the shared structure among the systems, namely, that there are $K$ underlying models, to improve our estimation. Of particular interest are settings in which we have many individual systems, but only observe a short trajectory from each system. How can we leverage the ``spatial'' information, i.e., information shared across systems, to help estimate models of individual systems despite limited ``temporal'' information, i.e., short lengths of individual trajectories? 

To answer this question, we propose a simple method to learn models for clusters of linear systems, with the following steps. First, estimate a small number $L_1$ of Markov parameters for each system individually with a standard least squares estimator. Second, cluster systems based on these estimates using $k$-means or other distance-based clustering algorithms. Third, calculate a refined model estimate for each cluster by estimating $L_2$ Markov parameters from trajectories within the cluster. Here, $L_2$ may be greater than $L_1$, and not all systems in a given cluster are required to have trajectories of length $L_2$. Finally, use the Ho-Kalman algorithm to obtain state space realizations.

This multi-step approach allows us to decouple the number of observations needed per system in order to determine its cluster membership, based on the first few estimated Markov parameters, and the number of observations needed per system in order to estimate a state space realization, which could be much greater, scaling with the system order $n$. As long as every cluster contains a few trajectories which are sufficiently long to estimate a state space realization, systems with shorter trajectories can also be identified simply by mapping them to the correct cluster. 

Additionally, refined estimation using multiple trajectories within a cluster improves the estimation error of Markov parameters by a factor of $1/\sqrt{N}$ over single trajectory estimation, where $N$ is the number of trajectories in that cluster. Thus we can quantify the tradeoff between the number of trajectories $N$ and the length of each trajectory $\Tbar$ needed to estimate Markov parameters to within a given accuracy.

Our contributions are summarized as follows. 
\begin{itemize}
\item We propose a method (\cref{alg:sysid}) for learning clusters of partially observed linear systems. 
\item We derive end-to-end finite sample guarantees for estimating Markov parameters and state space realizations for clusters of stable linear systems (Proposition \ref{prop:end2end}).
\item As a key step, in Theorem \ref{thm:MAIN-RESULT}, we prove finite-sample error bounds for estimating Markov parameters from multiple trajectories. In particular, we quantify the tradeoff in the number of trajectories and the length per trajectory needed to estimate Markov parameters to within a given accuracy. 
\item We empirically evaluate the algorithm on simulated linear system mixtures, varying $N$, $T$, and cluster widths, and compare its performance with a moment-based estimator that uses tensor decomposition. %
\end{itemize}

\subsection{Related Work}
Our work is related to the literature on estimating multiple systems with a shared latent structure, nonparametric clustering of time series,
and finite sample error bounds in system identification. 
\subsubsection{Latent shared structure}
\paragraph{Fully observed systems} 
A variety of recent works study the estimation of multiple dynamical systems with various shared structures, but are usually restricted to the setting of fully observed models of the form
\begin{align}\label{eq:state-dyn-2}
x_{i,t+1} = A_i x_{i,t} +B_i u_{i,t} + w_{i,t} \xpln{$ i \in [N]$}
\end{align}
with state and input observations $\{x_{i,t}, u_{i,t}\}_{i \in [N], t \in [T]}$ and noise $w_{i,t}$, with various assumptions on the shared structure of the state transition matrices $\set{A_i}$ (and when relevant, the input matrices $\set{B_i}$). 
\cite{modi2021joint,zhang2023multi} study low rank settings via empirical risk minimization. In a similar setting, \cite{rui2023estimation} proposes an algorithm based on least squares estimation and SVD, though the analysis is restricted to static linear models. 
Mixtures or cluster models for \eqref{eq:state-dyn-2} \cite{chen2022learning, toso2023learning}, and relatedly, models of bounded heterogeneity conditions \cite{xin2023learning, chen2023multi}, where $\opnorm{A_i - A_j} \leq \eps$ for $i, j \in [N]$, have been explored, with solutions based on regularized least squares.

Note however, that bounded heterogeneity of system parameters is not system-theoretic in the sense that it does not guarantee similar system behavior.
Two systems may be close in the sense that $\opnorm{A_1 - A_2} \leq \eps$, but $A_1$ may be strictly stable while $A_2$ is unstable. For sufficiently long trajectories, the difference in system behavior can be dramatic. In our framework, we define our cluster distances to be in terms of Markov parameters, i.e., impulse responses, which yields a more appropriate measure of system similarity. 

Furthermore, the aforementioned structural assumptions on $\set{(A_i, B_i)}_{i \in [N]}$ do not readily extend to structures on the matrix representations $(C, A, B, D)$ of partially observed systems, since such sets of matrices are only meaningful up to similarity transformations. This motivates our definition of framework of clustering systems by impulse response.

\paragraph{Time series clustering} 
Some works cluster time series using model-free approaches such as distance metrics on raw or transformed trajectories \cite{javed2020benchmark, aghabozorgi2015time}. We focus on model-based clustering to learn underlying the dynamical models of the systems, not just cluster observations.

\paragraph{Partially observed systems} 
The clustered setting is very similar to the setting of mixture models of dynamical systems, where each observed trajectory is generated from a randomly selected model $M_{(k)}$ following a multinomial distribution. %
Closest to our work are \cite{bakshi2023tensor, rui2024finite}, which use tensor decomposition of 6th-order moments to estimate mixture parameters. They propose different estimators with different sample complexities, but are sensitive to data conditioning and may require substantial data for higher order moments. 
Additionally, tensor approaches assume a linear independence, or non-degeneracy, condition on the mixture Markov parameters, in contrast to our cluster separation condition. Non-degeneracy requires the number of clusters $K$ to be less than the dimension of the Markov parameters to be estimated, while cluster separation does not. In general, these assumptions are not directly comparable.

\subsubsection{Finite sample bounds for system identification}
Finite sample error bounds in system identification have been extensively studied, often for single trajectories. Earlier works (e.g., \cite{campi2002finite}) focused on prediction error bounds for estimated linear models, represented by rational transfer functions. Recent works \cite{sarkar2021finite, oymak2021revisiting} provide bounds for learning strictly stable systems using least squares estimates of Markov parameters or of a Hankel matrix directly, followed by the Ho-Kalman algorithm \cite{ho1966effective} to obtain a system realization.
In \cite{bakshi2023new}, a moment-based estimator of the Markov parameters is proposed for learning stable and marginally stable systems from single trajectories, with the estimation error bounded polynomially in system parameters, though the explicit rates of dependence are not provided.

A core part of our analysis is in deriving finite-sample error bounds for estimating Markov parameters from multiple independent trajectories. While \cite{zheng2020non} gives $O(1/\sqrt{N})$ bounds for stable and unstable systems, their error bounds grows superlinearly with trajectory length $T$, making them unsuitable for strictly stable systems. In contrast, we show that for strictly stable systems, the error decreases as $1/\sqrt{NT}$ for $N$ trajectories of length $T$.

Finally, \cite{fattahi2021learning} uses $\ell_1$-regularization to improve the sample complexity of estimating $L$ Markov parameters from a single length $T$ trajectory to be poly-logarithmic in the system dimensions, but decaying as $O(1/T^{1/4})$. In certain high-dimension, low-data regimes, the $\ell_1$-regularized estimator can outperform the standard least squares estimator. 
Incorporating their estimator to further reduce data requirements per individual systems would be an interesting next step. 

\section{Setup} 
\subsection{Notation}
For any natural number $N \in \N$ we define the sets $[N]\coloneq \set{1, 2, ..., N}$ and $[0{:}N] \coloneq \set{0, 1, ..., N}$.
For a matrix $A$, $\Tr(A)$, $A'$, and $\pseudo{A}$ denote its trace, transpose, and Moore-Penrose pseudoinverse, respectively. $\Fnorm{A}$ and $\opnorm{A}$ denote the Frobenius norm and the operator, or spectral, norm of matrix $A$, respectively. The identity matrix in $\R^{d\by d}$ is written $I_d$. 
For an $n \by m$ matrix $M$, $\sigma_k(M)$ denotes the $k$th largest singular value of $M$, and $\sigma_{\min}(M)\coloneq \sigma_{\min(n,m)}(M)$ and $\sigma_{\max}(M)\coloneq\sigma_1(M)$ the smallest and largest singular values of $M$, respectively. 
For a square matrix $A$, $\rho(A)$ denotes its spectral radius.  %
We use $c$ to denote a universal positive constant, which may change in value between lines. For any real-valued functions $a, b$, the inequality $a \lsim b$ means that $a \leq cb$ for some $c$. 
Unless otherwise specified, all random variables are defined on the same probability space. 

For a model $M$ with a realization $(C, A, B, D)$, the $k$th Markov, or impulse response, parameter is given by $G_k = CA^{k-1}B$ for $k \geq 1$ and $G_0 = D$ for $k = 0$. Note that the Markov parameters are invariant under similarity transformations of the system matrices.
We also define the following Gramian-like quantities for $M$: 
\begin{align}
\Gamma_\infty(M) &\coloneq \sum_{t=0}^\infty \sigma_u^2 A^t B(A^t B)' + \sigma_w^2 A^t (A')^t \in \R^{n \by n}\\
\Gamma^O_\infty(M) &\coloneq  I_p + \sum_{t=0}^\infty CA^t(CA^t)' \in \R^{p \by p}.
\end{align}

\subsection{Assumptions}\label{sec:assumptions} %
\paragraph{Dynamics} We assume each model $k \in [K]$ has a minimal realization $(C_{(k)}, A_{(k)}, B_{(k)}, D_{(k)})$, ensuring controllability and observability, with a common, known system order $n$. We consider strictly stable systems with $\rho(A_{(k)}) < 1$, and a zero initial condition $x_{i,0} = 0$ for $i \in [\Ntot]$. 
\paragraph{Distributional assumptions} For $i \in [\Ntot], t\geq 0$, the inputs $u_{i,t}$ are independent zero-mean isotropic Gaussian random vectors with variances bounded by $\sigma_u^2 I_m$. As is standard \cite{oymak2021revisiting}, we assume the process noise $w^{(1)}_{i,t}$ and measurement noise $w^{(2)}_{i,t}$ are mutually independent and independent of the inputs $u_{i,t}$, and are zero-mean subgaussian random vectors with variance proxies bounded by  $\sigma_{w^{(1)}}^2$ and $\sigma_{w^{(2)}}^2$, respectively. Let $\sigma_w = \max(\sigma_{w^{(1)}}, \sigma_{w^{(2)}})$.

\paragraph{Cluster assumptions}
We assume that we observe an equal number $N$ of trajectories from each model cluster in $\bigM$, so that $\Ntot = KN$, where the number of models, $K$, is known. For our theoretical analysis, we assume the setting of exact (zero-width) clusters, where trajectories are generated exactly from one of the $K$ models. In simulations, we show how the width of the clusters, i.e., the spacing $d(M_1, M_2)$ between two systems $M_1$ and $M_2$ assigned to the same cluster $M_{(k)}$, affects the cluster estimation accuracy. 

In order to guarantee successful clustering systems based on the first $L + 1$ Markov parameters, we assume that the $K$ clusters are well-separated in impulse response:
\begin{align}\label{eq:cluster-distance}
	\sum_{t=0}^L \Fnorm{C_{(i)} A_{(i)}^t B_{(i)} - C_{(j)} A_{(j)}^t B_{(j)}}^2 > \Delta^2,
\end{align}
so that the distance between any two models $M_{(i)}, M_{(j)} \in \bigM$, with $i \neq j$, is lower bounded. The minimum separation between clusters, $\Delta$, required for successful clustering will depend on $L, \Tbar$, and system parameters, as quantified in Proposition \ref{prop:end2end}.
Note that for strictly stable systems, meaningful differences are captured in the first $L+1$ Markov parameters for sufficiently large $L$, as the Markov parameters decay exponentially.

\subsection{System Trajectories}
Given a single model $(C, A, B, D)$ for which we observe $N$ independent trajectories of length $\Tbar$, for any $t \in [T]$ and $0 \leq L \leq t$,
the output of system $i \in [N]$ at time $t$ is given by is given by
\begin{align} %
y_{i,t} &= CA^{L}x_{i,t-L} + \sum_{j=1}^L CA^{j-1} \lrp{Bu_{i,t-j} + w^{(1)}_{i,t-j}}\\
&\quad  + Du_{i,t} + w^{(2)}_{i,t}.
\end{align}

Define the vectors of inputs and process and measurement noise $\ubar_{i,t}^{(L)} \coloneq \begin{bmatrix} u_{i,t} & u_{i,t-1} & \cdots & u_{i,t-L} \end{bmatrix}'$ and 
$\wbar_{i,t}^{(L)} \coloneq  \big[w_{i,t}^{(2)} \quad w_{i,t-1}^{(1)} \quad  \cdots \quad w_{i,t-L}^{(1)}\big]'$,
as well as the matrices
\begin{align}\label{eq:GF-defn}
G^{(L)} &\coloneq \matb{D & CB & CAB & \cdots & CA^{L-1}B}_{p\times m(L+1)}, \\
F^{(L)} &\coloneq \matb{I_p & C & CA & \cdots & CA^{L-1}}_{p\times (p+Ln)},
\end{align}
Then we may write the equation for the output in vector form as 
\begin{align}\label{eq:vector-form}
y_{i,t} = G^{(L)} \ubar_{i,t}^{(L)} + F^{(L)} \wbar_{i,t}^{(L)} + CA^Lx_{i,t-L}, \ 0 \leq L \leq t,
\end{align}
where the last term $CA^Lx_{i,t-L}$ is a residual error term due to the unobserved state $L$ time steps ago. 
Note that $G\uL$ is a block matrix consisting of the first $L+1$ Markov parameters of the system, $\set{G_0, \ldots, G_{L}}$.

We define the matrices 
\begin{align}
Y\uL_i &= \matb{y_{i,L} & y_{i,L+1} & \cdots & y_{i,\Tbar}}, \\
U\uL_i &= \matb{\ubar\uL_{i,L} & \ubar\uL_{i, L+1} & \cdots&  \ubar\uL_{i, \Tbar}}, \\ 
E\uL_i &= CA^L \matb{x_{i,0} & x_{i,1} & \cdots & x_{i,\Tbar - L}}, \\
W\uL_i &= \matb{\wbar\uL_{i,L} & \wbar\uL_{i, L+1} & \cdots & \wbar\uL_{i,\Tbar}},
\end{align}
and concatenate them as
\begin{align}\label{eq:matrix-defns-std}
\bY\uL &=\matb{Y\uL_1 & \cdots & Y\uL_N}_{p\by N\Tf},\\
 \bU\uL & = \matb{U\uL_1 & \cdots & U\uL_N}_{m(L+1)\by N\Tf}, \\ 
\bE\uL &= \matb{E\uL_1 & \cdots & E\uL_N}_{p\by N\Tf}, \\
\bW\uL &= \matb{W\uL_1 & \cdots & W\uL_N}_{(p+Ln)\by N\Tf},
\end{align}
where $\Tf = \Tbar - L+1$. %

From \eqref{eq:vector-form}, it holds that
\begin{align}\label{eq:Y-eqn}
\bY\uL = \bE\uL + G\uL \bU\uL + F\uL \bW\uL.
\end{align}

\section{Algorithm}
We propose and analyze \cref{alg:sysid} for learning models of clusters of linear systems. It consists of the following steps: 
\begin{enumerate}
\item Estimate the first $L_1$ Markov parameters for each observed system individually with a standard least squares estimator, which can be expressed in terms of the pseudoinverse of the Toeplitz input matrix $U_i^{(L_1)}$:
$\Ghat_i^{(L_1)} = Y^{(L_1)}_i\pseudo{(U_i^{(L_1)})}$, for system $i \in [\Ntot]$.
\item Cluster individual trajectories using the Euclidean distance of the first $L+1$ Markov parameters, $\Fnorm{\Ghat\u{L_1}_i - \Ghat\u{L_1}_j}$, for $i, j \in [\Ntot]$ as a measure of separation. A standard k-means clustering algorithm or variant \cite{lloyd1982least} suffices for this step. %

\item Compute a refined estimate of the first $L_2 + 1$ Markov parameters for each cluster via least squares, based on the multiple trajectories assigned to each cluster: 
$\Ghat^{(L_2)} = \bY^{(L_2)}\pseudo{(\bU^{(L_2)})}$.

\item Obtain a state space realization based on the refined Markov parameter estimates by running the Ho-Kalman algorithm (e.g., \cite[Algorithm 1]{oymak2021revisiting}) to obtain an order $n$ realization $(\Chat_{(k)}, \Ahat_{(k)}, \Bhat_{(k)}, \Dhat_{(k)})$ for each cluster model.
\end{enumerate}

The discussion in Section \ref{sec:interp} provides guidance on selecting parameters $L_1$ and $L_2$ in terms of $N$ and $\Tbar$, motivated by the finite sample guarantees of Theorem \ref{thm:MAIN-RESULT}. In general, the clustering step requires shorter trajectories than the refined estimation step.

\begin{algorithm}[h]
\caption{System Identification for Clustered Partially Observed Systems}\label{alg:sysid}
\begin{algorithmic}[1]
\Require Sampled trajectories $\set{u_i(0),(u_i(t), y_i(t))\mid t \in [\Tbar]}$, for $i \in [\Ntot]$; system order $n$; number of clusters $K$; number of Markov parameters $L_1\geq 1, L_2\geq 2n+1$; 
\Ensure Cluster assignments $(\hat\kappa_1, \hat\kappa_2, \ldots, \hat\kappa_{\Ntot}) \in [K]^{\Ntot}$; Markov parameters and balanced realizations $\set{\Ghat\u{L_2}, (\Chat_{(k)}, \Ahat_{(k)}, \Bhat_{(k)}, \Dhat_{(k)})}_{ k \in [K]}$ 

\For{$i \in [\Ntot]$}
	\State $\Ghat_i^{(L_1)} \gets Y^{(L_1)}_i\pseudo{(U_i^{(L_1)})}$ \Comment{Individual estimates}
\EndFor
\State $(\hat\kappa_1, \hat\kappa_2, ..., \hat\kappa_{\Ntot}) \gets$ k-means($\set{\Ghat_i^{(L)}: i \in [N]}, \Fnorm{\cdot}$) \Comment{Cluster assignments}
\For{$k \in [K]$}
	\State $\bY^{(L_2)} \gets \matb{Y^{(L_2)}_{\kappa_{i_1}} & \cdots & Y^{(L_2)}_{\kappa_{i_N}}}$ %
	\State $\bU^{(L_2)} \gets \matb{U^{(L_2)}_{\kappa_{i_1}} & \cdots & U^{(L_2)}_{\kappa_{i_N}}}$ 
	\Statex where $\kappa_{i_j} = k$ for $j\in [N]$.
	\State $\Ghat^{(L_2)} \gets \bY^{(L_2)}\pseudo{(\bU^{(L_2)})}$  \Comment{Refined estimates}
	\State $(\Chat_{(k)}, \Ahat_{(k)}, \Bhat_{(k)}, \Dhat_{(k)}) \gets $ Ho-Kalman$\lrp{\Ghat^{(L_2)}, n}$ %
\EndFor
\end{algorithmic}
\end{algorithm}

\section{Results}
\subsection{Estimation with Multiple Trajectories}
We provide end-to-end finite sample bounds for estimating Markov parameters and state space realizations for clusters of stable linear systems. These results rely on Theorem \ref{thm:MAIN-RESULT}, which bounds the error in estimating Markov parameters from multiple independent trajectories.

\begin{theorem}\label{thm:MAIN-RESULT}
	Suppose we have $N$ independent trajectories of length $\Tbar$ sampled from a model $M$ with realization $(C, A, B, D)$. Let $1 \leq L \leq \Tbar$, and define $\Tf \coloneq \Tbar - L + 1$. Then for any $\eps>0, \delta \in (0,1/e)$, when 
\begin{align}
	N\Tf &\gsim \frac{L}{\eps^2 \sigma_u^2} \lrp{p+(m+n)L+\ln\lrp{\frac{NT}{\delta}}}^2 \\
	&\quad \quad \cdot \lrp{\opnorm{CA^L}^2\opnorm{\Gamma_{\infty}}+ \sigma_w^2 \opnorm{\Gamma_{\infty}^O}}, 
\end{align}
the least squares estimator $\Ghat\uL= \bY\uL\pseudo{\bU\uL}$ of \eqref{eq:Y-eqn} satisfies
	\begin{align}
		\Pr{\big\lVert{\Ghat\uL - G\uL}\big\rVert_2 \lsim \eps} \geq 1-\delta.
	\end{align}
\end{theorem}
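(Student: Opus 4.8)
The plan is to work from the exact algebraic form of the least-squares error and reduce the theorem to two estimates: a lower bound on the smallest singular value of the stacked regressor $\bU\uL$, and an upper bound on a ``noise-times-regressor'' cross term. On the event that $\bU\uL$ has full row rank --- which the first estimate supplies with high probability --- we have $\bU\uL\pseudo{\bU\uL}=I_{m(L+1)}$, so $G\uL\bU\uL\pseudo{\bU\uL}=G\uL$, and combining this with \eqref{eq:Y-eqn},
\begin{align}
\Ghat\uL-G\uL=\lrp{\bE\uL+F\uL\bW\uL}(\bU\uL)'\lrp{\bU\uL(\bU\uL)'}^{-1},
\end{align}
whence
\begin{align}
\opnorm{\Ghat\uL-G\uL}\le\frac{\opnorm{\bE\uL(\bU\uL)'}+\opnorm{F\uL\bW\uL(\bU\uL)'}}{\sigma_{\min}(\bU\uL)^2}.
\end{align}
It therefore suffices to show, on three events each of probability at least $1-\delta/3$, that (i) $N\Tf\sigma_u^2\lesssim\sigma_{\min}(\bU\uL)^2\le\opnorm{\bU\uL}^2\lesssim N\Tf\sigma_u^2$, and (ii) each cross term in the numerator is $\lesssim \sigma_u\sqrt{L\,N\Tf}\,\bigl(p+(m+n)L+\ln(N\Tbar/\delta)\bigr)$ times $\opnorm{CA^L}\sqrt{\opnorm{\Gamma_\infty}}$, respectively $\sigma_w\sqrt{\opnorm{\Gamma_\infty^O}}$. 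Dividing (ii) by (i) gives $\opnorm{\Ghat\uL-G\uL}\lesssim\eps$ under the stated lower bound on $N\Tf$.

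\textbf{Step (i): conditioning of the design.} The Gram matrix splits over the $N$ independent trajectories, $\bU\uL(\bU\uL)'=\sum_{i=1}^N U\uL_i(U\uL_i)'$, and each block is the (block-Toeplitz) empirical second-moment matrix of an i.i.d.\ Gaussian input sequence, with $\E\bigl[U\uL_i(U\uL_i)'\bigr]\succeq\Tf\sigma_u^2 I_{m(L+1)}$. A standard covering/matrix-concentration argument for such Gaussian Toeplitz designs (as in \cite{oymak2021revisiting,sarkar2021finite}) yields the claimed two-sided bound once $N\Tf\gtrsim mL+\ln(1/\delta)$, which is implied by the hypothesis; in particular $\bU\uL$ is well conditioned and has full row rank.

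\textbf{Step (ii): the cross terms.} Write the regression residual $\bR\uL\coloneq\bE\uL+F\uL\bW\uL=\bY\uL-G\uL\bU\uL$; its columns are $CA^Lx_{i,t-L}+F\uL\wbar\uL_{i,t}$, the part of each output not explained by the last $L{+}1$ inputs. Two structural facts drive the bound. First, $\bE\uL(\bU\uL)'$ and $F\uL\bW\uL(\bU\uL)'$ are both mean zero: the input window $\ubar\uL_{i,t}$ (times $t{-}L,\dots,t$) is independent of $\wbar\uL_{i,t}$ (the noise is globally independent of the inputs) and, using $x_{i,0}=0$, of $x_{i,t-L}$ (which depends only on inputs and noise at times $0,\dots,t{-}L{-}1$). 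Second, the column covariances are controlled by the Gramian-like quantities: $\E[x_{i,\tau}x_{i,\tau}']\preceq\Gamma_\infty$, the columns of $\bW\uL$ have covariance $\preceq\sigma_w^2 I_{p+Ln}$, and $\opnorm{F\uL}^2=\opnorm{I_p+\sum_{j=1}^{L}CA^{j-1}(CA^{j-1})'}\le\opnorm{\Gamma_\infty^O}$. I would bound each cross term by an $\eps$-net over the unit spheres of $\R^{p+Ln}$ and $\R^{m(L+1)}$ (log-cardinality $\lesssim p+(m+n)L$), controlling the resulting scalar sum for each fixed pair of test vectors as follows. Since $\bW\uL$ is entirely independent of $\bU\uL$, conditioning on $\bU\uL$ makes $a'\bW\uL(\bU\uL)'b$ a linear functional of the independent sub-Gaussian noise coordinates, hence sub-Gaussian with variance proxy $\lesssim L\,\sigma_w^2\,\opnorm{\bU\uL}^2$ --- the extra factor $L$ because each noise sample enters up to $L$ overlapping windows --- which with $\opnorm{\bU\uL}^2\lesssim N\Tf\sigma_u^2$ from Step~(i) and the bound on $\opnorm{F\uL}$ gives the $\sigma_w\sqrt{\opnorm{\Gamma_\infty^O}}$ contribution. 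For the state term I split $x_{i,\tau}=x_{i,\tau}^{u}+x_{i,\tau}^{w}$ into input- and noise-driven parts: the noise part is again independent of $\bU\uL$ and handled as above, while $CA^L\bX^{u}(\bU\uL)'$ is linear in the Gaussian input vector --- a centered, purely off-diagonal degree-two Gaussian chaos --- so Hanson--Wright together with the same net bounds it by $\opnorm{CA^L}\sqrt{\opnorm{\Gamma_\infty}}$ times the same factors.

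\textbf{Combining, and the main obstacle.} A union bound over the three events and division by $\sigma_{\min}(\bU\uL)^2\gtrsim N\Tf\sigma_u^2$ yields
\begin{align}
\opnorm{\Ghat\uL-G\uL}\lesssim\frac{\sqrt{L}\,\bigl(p+(m+n)L+\ln(N\Tbar/\delta)\bigr)}{\sqrt{N\Tf}\,\sigma_u}\sqrt{\opnorm{CA^L}^2\opnorm{\Gamma_\infty}+\sigma_w^2\opnorm{\Gamma_\infty^O}},
\end{align}
which is $\le c\eps$ exactly when $N\Tf$ exceeds the stated threshold. I expect the cross-term bound of Step~(ii) to be the crux: unlike classical one-output-per-time regression, the stacked data $\bU\uL,\bW\uL$ carry heavy within-trajectory temporal correlations (overlapping length-$L$ windows), so the relevant scalar sums are not sums of independent increments. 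The cleanest route is the split above --- the ``all-noise'' part handled by conditioning on $\bU\uL$, the input-driven state part by Hanson--Wright --- but one must carefully (a) verify that $\bE\uL(\bU\uL)'$ is genuinely mean zero rather than merely small, since this is what lets the $\opnorm{CA^L}$ term inherit the $1/\sqrt{N\Tf}$ decay, and (b) track how the window length $L$ and the ambient dimensions $p+Ln$, $m(L+1)$ enter the variance proxy and the net cardinality. Step~(i) is comparatively routine but is the reason the hypothesis constrains the product $N\Tf$ rather than $N$ and $\Tbar$ individually.
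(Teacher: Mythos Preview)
Your high-level plan---decompose the error and bound $\sigma_{\min}(\bU\uL)$, $\opnorm{\bE\uL(\bU\uL)'}$, and $\opnorm{\bW\uL(\bU\uL)'}$ separately---matches the paper exactly, but the techniques for each piece differ substantially. The paper's unifying device, used in all three of Propositions~\ref{prop:UU-bound}--\ref{prop:EU-bound}, is a \emph{Page partition}: the columns of each Toeplitz block $U_i\uL$ (and likewise $W_i\uL,E_i\uL$) are split into $L{+}1$ interleaved sub-matrices whose columns sit $L{+}1$ time steps apart and are therefore i.i.d. This handles $UU'$ via Gaussian singular-value bounds plus Lemma~\ref{lem:decompose}, and $WU'$ via Bernstein on products of independent sub-Gaussian matrices. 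For $EU'$ the Page partition sets up a martingale: within each sub-matrix, $W_{i,k}=\langle\bar u_{i,\tau(j,k)},b\rangle$ is independent of the past and hence conditionally sub-Gaussian given the adapted $Z_{i,k}=\langle a,CA^Lx_{i,\tau(j,k-1)}\rangle$, so a self-normalized martingale bound applies; the only Hanson--Wright ingredient is the per-term estimate $Z_{i,k}^2\lesssim\opnorm{CA^L}^2\opnorm{\Gamma_\infty}\ln(1/\delta)$ of Lemma~\ref{lem:CAx-bound}.

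Your alternative---condition on $\bU\uL$ for the all-noise pieces and treat the input-driven state term as an off-diagonal Gaussian chaos via Hanson--Wright---is a genuinely different route, and the $WU'$ step is fine as written. But your treatment of the noise-driven state contribution has a gap: you say it is ``handled as above'' with each noise sample entering ``up to $L$ overlapping windows,'' yet each $w^{(1)}_{i,s}$ enters \emph{every} subsequent state $x^w_{i,\tau}$, $\tau>s$, with geometrically decaying weight. Conditioning on $\bU\uL$, the variance proxy of $a'\bE\uL(\bU\uL)'b$ restricted to the noise-driven state becomes $\opnorm{\mathrm{Cov}\bigl(a'CA^Lx^w_{i,\cdot}\bigr)}\,\opnorm{\bU\uL}^2$, and the operator norm of that process covariance carries a factor like $\sum_{k\ge0}\opnorm{A^k}$ that is absent from the theorem. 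A simple repair is to reverse the conditioning---fix $w^{(1)}$ and use that $U_i'b$ is the output of a length-$L$ FIR filter on the Gaussian inputs, whose Toeplitz covariance has operator norm at most $(2L{+}1)\sigma_u^2$---which recovers the intended factor $L$. The paper's Page-partition martingale argument sidesteps this asymmetry altogether and also avoids having to compute the Frobenius and operator norms of the chaos coefficient matrix in your $E^uU'$ step.
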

\vspace{1em}

The proof of Theorem \ref{thm:MAIN-RESULT} is found in Section \ref{sec:main-proof}. We follow a similar strategy to \cite{oymak2021revisiting}, but extend it to the setting of multiple trajectories. In a crucial step, we prove a restricted isometry property of the Toeplitz input matrix $U$ similar to \cite{oymak2021revisiting, sarkar2021finite}, but with a more streamlined approach, where we partition $U$ into $L+1$ embedded Page matrices and exploit concentration inequalities for Gaussian random matrices. 

\subsection{End-to-end Guarantees}
\begin{proposition}\label{prop:end2end}
	Suppose we have $\Ntot=NK$ trajectories of length $T$, with $N$ trajectories generated from each of the $K$ models $M_{(k)} \in \bigM$. Suppose the models in $\bigM$ are separated by at least $\Delta$ in their first $L_1+1$ Markov parameters as in \eqref{eq:cluster-distance}. Let $L_1, L_2$ be the number of Markov parameters estimated in Algorithm \ref{alg:sysid}, where $1 \leq L_1, L_2 \leq T$.

For every $\eps >0, \delta \in (0,1/e)$, if
\begin{align}\label{eq:first-T}
T-L_1 \gtrsim \frac{L_1 \Phi^{(L_1)} }{\Delta^2 \sigma_u^2} \lrp{p + L_1(m+n) + \ln\lrp{\frac{T}{\delta}}}^{2},
\end{align} 
and if the $k$-means algorithm converges to its global minimum, then \cref{alg:sysid} recovers cluster membership exactly with probability at least $1-\delta$.
Here, $\Phi^{(L)} \coloneq \max_{k \in [K]} \opnormBase{C_{(k)} A_{(k)}^L}^2 \opnormBase{\Gamma_\infty(M_{(k)})} + \sigma_w^2 \opnormBase{\Gamma^O_\infty(M_{(k)})}$.

\vspace{0.25em}
Furthermore, if
\begin{align}\label{eq:second-NT}
N(T-L_2) \gtrsim \frac{L_2 \Phi^{(L_2)} }{\eps^2\sigma_u^2} \lrp{p + L_2(m+n) + \ln\lrp{\frac{NT}{\delta}}}^{2},
\end{align}
there exists a permutation $\pi(\cdot)$ on $[K]$ such that the estimates of the first $L_2+1$ Markov parameters of the cluster models in $\bigM$ returned by \cref{alg:sysid} satisfy
\begin{align}
\Pr{\sup_{k \in [K]} \opnorm{\Ghat_{\pi(k)}\u{L_2} - G_k\u{L_2}} \lsim \eps} \gsim 1-\delta.
\end{align}
Additionally, when $L_2 \geq 2n+1$, 
\cref{alg:sysid} returns a realizations $\setBase{\hat C_{(k)}, \hat A_{(k)}, \hat B_{(k)}, \hat D_{(k)} \mid k \in [K]}$ such that for each $k$, for some invertible matrix $Q_k\in \R^{n \by n}$,
\begin{align}
& \max \Big\{\big\lVert{C_{(k)} - \hat \Cscr_{\pi(k)} Q_k}\big\rVert_\mathrm{F}, \big\lVert{A_{(k)} - \inv{Q_k}\hat \Ascr_{\pi(k)} Q_k}\big\rVert_\mathrm{F},\\
&\quad \big\lVert{B_{(k)} - \inv{Q_k} \hat \Bscr_{\pi(k)}}\big\rVert_\mathrm{F}, \big\lVert{D_{(k)}, - \hat \Dscr_{\pi(k)}}\big\rVert_\mathrm{F} \Big\} \\
& \quad \quad \lsim \sqrt{\frac{nL_2}{\sigma_n^2(H)}} \eps,
\end{align}
where $H_k$ is the $L_2/2 \by L_2/2$ principal submatrix of the Hankel matrix for system $M_{(k)}$.
\end{proposition}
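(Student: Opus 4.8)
The plan is to chain three pieces: Theorem~\ref{thm:MAIN-RESULT} applied with $N=1$ to make the individual estimates clusterable, Theorem~\ref{thm:MAIN-RESULT} applied with $N$ trajectories per cluster to refine them, and a perturbation analysis of the Ho-Kalman map to pass from Markov parameters to realizations.

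\textbf{Step 1 (exact clustering).} I would fix a system $i$ with true label $k=\kappa_i$ and apply Theorem~\ref{thm:MAIN-RESULT} with a single trajectory, $L=L_1$, $\Tf=T-L_1+1$, and target accuracy $\eps'$, bounding the Gramian factor $\opnorm{C_{(k)}A_{(k)}^{L_1}}^2\opnorm{\Gamma_\infty(M_{(k)})}+\sigma_w^2\opnorm{\Gamma^O_\infty(M_{(k)})}$ by $\Phi^{(L_1)}$; this yields $\opnorm{\Ghat_i^{(L_1)}-G_k^{(L_1)}}\lesssim\eps'$ with probability $\ge 1-\delta'$ once $T-L_1\gtrsim \frac{L_1\Phi^{(L_1)}}{(\eps')^2\sigma_u^2}(p+L_1(m+n)+\ln(T/\delta'))^2$. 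Since $\Ghat_i^{(L_1)}-G_k^{(L_1)}$ has at most $p$ nonzero singular values, $\Fnorm{\cdot}\le\sqrt{p}\,\opnorm{\cdot}$; so setting $\delta'=\delta/\Ntot$, union bounding over the $\Ntot=NK$ systems (which only enlarges the logarithm), and choosing $\eps'$ a small enough constant multiple of $\Delta/\sqrt{p}$, condition~\eqref{eq:first-T} guarantees that with probability $\ge 1-\delta$ every $\Ghat_i^{(L_1)}$ lies within Frobenius distance $\Delta/4$ of its cluster center. Combined with~\eqref{eq:cluster-distance}, which gives $\Fnorm{G_{(i)}^{(L_1)}-G_{(j)}^{(L_1)}}>\Delta$ for $i\ne j$, all within-cluster distances of the point cloud $\{\Ghat_i^{(L_1)}\}$ are then below $\Delta/2$ and all between-cluster distances above $\Delta/2$, and a standard margin argument identifies the true partition as the unique minimizer of the $k$-means objective; under the stated global-optimality hypothesis, $k$-means thus recovers cluster membership exactly, up to a relabeling $\pi$ of $[K]$.

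\textbf{Step 2 (refined Markov parameters).} To avoid conditioning on the outcome of Step~1, I would define directly the event that, for every $k\in[K]$, least squares run on the $N$ trajectories truly in cluster $k$ returns $\Ghat^{(L_2)}_{(k)}$ with $\opnorm{\Ghat^{(L_2)}_{(k)}-G_k^{(L_2)}}\lesssim\eps$; by Theorem~\ref{thm:MAIN-RESULT} (now with $N$ trajectories, $L=L_2$, Gramian factor $\le\Phi^{(L_2)}$), condition~\eqref{eq:second-NT}, and a union bound over $k$, this event has probability $\ge 1-\delta$. On the success event of Step~1 the algorithm's refined estimates coincide with these, so $\sup_k\opnorm{\Ghat^{(L_2)}_{\pi(k)}-G_k^{(L_2)}}\lesssim\eps$ with probability $\gtrsim 1-\delta$, which is the asserted Markov-parameter bound.

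\textbf{Step 3 (Ho-Kalman).} Fixing $k$ and abbreviating $\Ghat=\Ghat^{(L_2)}_{\pi(k)}$, $G=G_k^{(L_2)}$: block $(a,b)$ of the $(L_2/2)\by(L_2/2)$ block-Hankel matrix equals $G_{a+b-1}$, and each Markov parameter occupies at most $L_2/2$ blocks, so $\Fnorm{\hat H_k-H_k}^2\le (L_2/2)\Fnorm{\Ghat-G}^2$ and hence $\opnorm{\hat H_k-H_k}\le\Fnorm{\hat H_k-H_k}\lesssim\sqrt{L_2}\,\eps$ up to dimension factors. Since $L_2\ge 2n+1$ and $M_{(k)}$ is minimal of order $n$, $\sigma_n(H_k)>0$, and provided the perturbation is a small enough fraction of $\sigma_n(H_k)$ (the regime in which the claimed bound is nonvacuous), I would invoke the Ho-Kalman perturbation analysis \cite{oymak2021revisiting} to obtain one invertible $Q_k$ for which $\opnorm{C_{(k)}-\hat\Cscr_{\pi(k)}Q_k}$, $\opnorm{A_{(k)}-\inv{Q_k}\hat\Ascr_{\pi(k)}Q_k}$, $\opnorm{B_{(k)}-\inv{Q_k}\hat\Bscr_{\pi(k)}}$, and $\opnorm{D_{(k)}-\hat\Dscr_{\pi(k)}}$ are all $\lesssim\opnorm{\hat H_k-H_k}/\sigma_n(H_k)$, the loosest of the four --- that of $\hat\Ascr_{\pi(k)}$ --- picking up two factors $1/\sqrt{\sigma_n(H_k)}$ from truncating the Hankel SVD and from the pseudoinverse in the shift equation. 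Converting these operator norms to Frobenius norms on the realization matrices costs a further $\sqrt{n}$, giving $\lesssim\sqrt{nL_2/\sigma_n^2(H_k)}\,\eps$; maximizing over $k$ and intersecting the success events of Steps~1--3 completes the proof.

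\textbf{Main obstacle.} Because Theorem~\ref{thm:MAIN-RESULT} does the statistical heavy lifting, the remaining work is entirely error propagation, and I expect Step~3 to be the fiddliest: moving from operator-norm control of $\Ghat-G$ to the block-Hankel perturbation with the right block-structure and multiplicity bookkeeping; invoking the Ho-Kalman bound, which furnishes a single common similarity $Q_k$ for all four matrices only when the perturbation is dominated by $\sigma_n(H_k)$, so the stated estimate tacitly requires $\eps\lesssim\sigma_n(H_k)/\sqrt{L_2}$; and assembling the two $1/\sqrt{\sigma_n(H_k)}$ factors and the operator-to-Frobenius conversions into the final $\sqrt{nL_2}/\sigma_n(H_k)$ rate. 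A secondary subtlety is pinning down the constant in the $k$-means margin argument of Step~1, which also leans on the global-optimality hypothesis.
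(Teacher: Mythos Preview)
Your proposal is correct and follows essentially the same four-step chain as the paper's own proof: Theorem~\ref{thm:MAIN-RESULT} with $N=1$ for individual estimates, a margin argument for $k$-means under separation~\eqref{eq:cluster-distance}, Theorem~\ref{thm:MAIN-RESULT} with $N$ trajectories for the refined estimates, and the Ho-Kalman robustness result of \cite{oymak2021revisiting} (stated here as Proposition~\ref{prop:ho-kalman}) for the realization. You are in fact more explicit than the paper about the union bound over $\Ntot$ systems in Step~1, about defining the Step~2 event so as to avoid conditioning on the clustering outcome, and about the implicit requirement $\eps\lesssim\sigma_n(H_k)/\sqrt{L_2}$ in Step~3---all of which are genuine subtleties the paper's proof glosses over.
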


The proof of Proposition \ref{prop:end2end}, provided in Section \ref{sec:proof:end2end}, proceeds by tracing parameter estimation errors through the four main steps of \cref{alg:sysid}. Theorem \ref{thm:MAIN-RESULT} provides bounds on the least squares estimator for individual system Markov parameters and the refined estimates Markov parameters within each cluster. %
The assumed separation between cluster models combined with the high probability estimation error bounds implies that the global minimizer of the sum of squared errors objective of $k$-means, based on the estimated Markov parameters, matches the true cluster assignments of the observed trajectories. While standard $k$-means algorithms are not guaranteed to find this global minimum, even when the clusters are well-separated, we assume a good initialization that leads to this minimum. In practice, this can be achieved with multiple initialization schemes.   %

\subsection{Interpretation of Results}\label{sec:interp}
\paragraph{Decoupling data requirements for clustering and estimation}
\cref{alg:sysid} allows us to decouple the data requirements for clustering from that for estimating a state space realization. %
In particular, Theorem \ref{thm:MAIN-RESULT} shows learning the first $L+1$ Markov parameters requires a trajectory length $T$ that scales at most as $O(L^3)$. It is likely that the cubic dependence on $L$ is an artifact of the proof, but $T$ must scale at least linearly with $L$. Since estimating an order $n$ realization requires $L_2 = 2n+1$ Markov parameters, $\Tbar$ must scale polynomially in $n$. Meanwhile, if the $K$ cluster models are well-separated in their first $L_1+1$ Markov parameters, where we only require $1 \leq L_1 \leq \Tbar/m$, then the number of samples needed for clustering is much less than for estimation.
If in practice, systems have varying trajectory lengths, we can set $L_1 = \min_{i \in [\Ntot]} \Tbar_i / m$ to accomodate all trajectory lengths, while restricting the realization step to sufficiently long trajectories within each cluster. 
Regarding the scaling of $T$ with $L$, it is a point of future work to find lower bounds to evaluate the tightness of our sample complexity results.

\paragraph{Trade-offs between $N$ and $T$ in estimation error guarantees}
Note that with a single length $T$ trajectory, Theorem \ref{thm:MAIN-RESULT} only guarantees that we can learn $L_1$ impulse response parameters up to accuracy $\eps_1$ where $L_1/\eps_1 \leq \sqrt{T}$. However, when we have observations from $N$ trajectories, we can learn $L_2$ parameters to accuracy $\eps_2$ where $L_2/\eps_2 \leq \sqrt{NT}$. In other words, we have a scaling of 
$$\frac{L_2}{\eps_2} \approx \sqrt{N} \frac{L_1}{\eps_1}$$
when $T$ is fixed.
This relation can be viewed as leveraging the $N$ trajectories to either increase the accuracy of estimating a fixed length of impulse response, or of increasing the length of impulse response that can be estimated with a fixed accuracy, for a given trajectory length $T$. 

\section{Proof of Theorem \ref{thm:MAIN-RESULT}}\label{sec:main-proof}
In this section, we drop the superscript $(L)$ in all relevant variables, %
where $L+1$ is the number of Markov parameters to be estimated.
When $U$ has full row rank, $UU'$ is invertible, and $\pU = U'\inv{(UU')}$. This holds almost surely when $U$ is a block Toeplitz matrix with random Gaussian entries \cite{pan2015estimating}, and the singular values of $UU'$ can be lower bounded by a positive number with high probability. 
Then the operator norm of the least squares estimation error of the matrix of the first $L+1$ Markov parameters can be upper bounded as $\opnormBase{\Ghat\uL - G\uL} \leq \opnormBase{EU'}\opnormBase{\inv{(UU')}} + \opnormBase{F\uL} \opnormBase{WU'} \opnormBase{\inv{(UU')}}$ by expanding the expression for $\bY\u{L_2}$ using \eqref{eq:Y-eqn}.

Following a proof strategy similar to that of \cite[Theorem 3.2]{oymak2021revisiting}, we upper bound each of the factors in the error decomposition separately, as stated in Propositions \ref{prop:UU-bound}, \ref{prop:WU-bound}, and \ref{prop:EU-bound} below. %
Combining these results, and using that $\opnormBase{F\uL} \leq \opnormBase{\Gamma_{\infty}^O}^{1/2}$, we have that with probability at least $1-\delta$, when $\delta \in (0, 1/e)$, 
\begin{align}
	\opnorm{\Ghat\uL - G\uL} &  \lsim \frac{\sqrt{L} \lrp{p+(m+n)L + \ln\lrp{NT/\delta}}}{\sigma_u \sqrt{N\Tf}}\\
	&\quad \cdot \lrp{\opnorm{CA^L}\opnorm{\Gamma_\infty}^{1/2} + \sigma_w \opnorm{\Gamma_{\infty}^O}^{1/2}}.
\end{align}
Expressing this result in terms of a sample complexity in $N\Tf$ for a given error $\eps>0$ gives the claim.

\paragraph{Conditioning of input covariance matrix}%
We first prove a high probability bound on the minimum singular value of the matrix $UU'$, showing that it is well-conditioned. Recall that for all $i \in [N], t \in [0{:}\Tbar]$, $u_{i,t}$ is independently distributed as $\Nscr(0,\sigma_u^2I_m)$. Ideally, we would like to use concentration bounds for the singular values of Gaussian random matrices, but the Toeplitz structure of each block $U_i$ creates statistical dependencies across elements of the matrix. We get around this complication by partitioning $U$ into $L+1$ embedded Page matrices, such that the elements within each submatrix are statistically independent, and then leveraging bounds on the singular values of Gaussian random matrices. 
\begin{proposition}\label{prop:UU-bound}
For any $\alpha \in (0,1)$, $\delta \in (0,1)$, if 
\begin{align}
N\Tf \gsim \frac{L}{\alpha^2}\lrp{mL + \ln\lrp{\frac{L}{\delta}}},
\end{align}
then with probability at least $1-\delta$,
\begin{align}
\opnorm{(UU')^{-1}} \leq \frac{1}{\sigma_{\min}(UU')} \leq \frac{1}{(1-\alpha)^2\sigma_u^2 N\Tf}.
\end{align}
\end{proposition}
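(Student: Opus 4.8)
The plan is to break the block-Toeplitz matrix $U = \bU\uL = \matb{U_1 & \cdots & U_N}$ into statistically independent pieces with i.i.d.\ Gaussian entries, lower-bound the smallest singular value of each piece using standard Gaussian random matrix estimates, and then recombine via Weyl's inequality. First I would carry out the decorrelation. Each block $U_i \in \R^{m(L+1)\by\Tf}$ has columns $\ubar_{i,t}^{(L)}$ that stack $u_{i,t}, u_{i,t-1}, \dots, u_{i,t-L}$ for $t = L, \dots, \Tbar$, and two columns indexed by $t,t'$ share an input vector only when $\abs{t-t'}\le L$. Hence, for each $j\in[0{:}L]$, collecting all columns of $U$ whose time index is congruent to $j$ modulo $L+1$ produces a submatrix $P_j \in \R^{m(L+1)\by q_j}$ assembled from pairwise disjoint, hence jointly independent, collections of the vectors $\set{u_{i,s}}$; across blocks the columns are independent as well. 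Thus every entry of $P_j$ is i.i.d.\ $\Nscr(0,\sigma_u^2)$. Since every column of $U$ lies in exactly one $P_j$, we get $\sum_{j=0}^L q_j = N\Tf$ and the exact identity $UU' = \sum_{j=0}^L P_jP_j'$; and since (in the regime $\Tf \gtrsim L$ of interest) each residue class receives $\asymp \Tf/L$ columns per block, $q_j \gtrsim N\Tf/L$ for every $j$.

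Next I would bound $\sigma_{\min}(P_j)$. For a $d\by q$ matrix with i.i.d.\ $\Nscr(0,\sigma_u^2)$ entries and $q\ge d$, Gordon's comparison inequality gives $\E\,\sigma_{\min} \ge \sigma_u(\sqrt q - \sqrt d)$, and since $\sigma_{\min}(\cdot)$ is $1$-Lipschitz in the Frobenius norm, Gaussian concentration yields $\sigma_{\min} \ge \sigma_u(\sqrt q - \sqrt d - s)$ with probability at least $1-e^{-s^2/2}$. Applying this to each $P_j$ with $d = m(L+1)$, $q = q_j$, and $s = \sqrt{2\ln((L+1)/\delta)}$, then taking a union bound over $j\in[0{:}L]$, all $L+1$ estimates hold simultaneously with probability at least $1-\delta$. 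On this event, using $(\sqrt{m(L+1)}+s)^2 \lesssim mL + \ln(L/\delta)$, the hypothesis $N\Tf \gtrsim \tfrac{L}{\alpha^2}\big(mL + \ln(L/\delta)\big)$ together with $q_j \gtrsim N\Tf/L$ forces $\sqrt{m(L+1)} + s \le \alpha\sqrt{q_j}$ --- which in particular ensures $q_j \ge m(L+1)$, as Gordon's bound requires --- so that $\sigma_{\min}(P_j) \ge (1-\alpha)\sigma_u\sqrt{q_j}$ for every $j$.

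Finally I would recombine. Applying Weyl's inequality to the sum of positive semidefinite matrices $UU' = \sum_{j=0}^L P_jP_j'$,
\begin{align*}
\sigma_{\min}(UU') &\ge \sum_{j=0}^L \sigma_{\min}(P_jP_j') = \sum_{j=0}^L \sigma_{\min}(P_j)^2 \\
&\ge (1-\alpha)^2\sigma_u^2 \sum_{j=0}^L q_j = (1-\alpha)^2\sigma_u^2 N\Tf,
\end{align*}
and taking reciprocals yields $\opnorm{(UU')^{-1}} = 1/\sigma_{\min}(UU') \le \big((1-\alpha)^2\sigma_u^2 N\Tf\big)^{-1}$, which is the claimed bound.

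I expect the decorrelation step to be the crux: the Toeplitz structure of each $U_i$ induces dependencies that rule out a direct appeal to Gaussian random matrix bounds, and the point is to see that regrouping columns by time index modulo $L+1$ simultaneously restores full independence within each piece and exhausts every column, so that $\sum_j q_j = N\Tf$ exactly and no mass is lost in the recombination. What remains is bookkeeping: choosing $s$ so the union bound over the $L+1$ pieces costs only a $\ln(L/\delta)$ term, and verifying that the stated scaling of $N\Tf$ is precisely what pushes $\sqrt{m(L+1)}+s$ below $\alpha\sqrt{q_j}$, so that the clean $(1-\alpha)^2$ factor survives into the final bound.
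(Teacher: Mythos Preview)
Your proposal is correct and follows essentially the same approach as the paper: partition the block-Toeplitz $U$ into $L+1$ Page submatrices by taking columns whose time index lies in a fixed residue class modulo $L+1$, apply Gordon's inequality with Gaussian concentration to each piece, and recombine via the identity $UU'=\sum_j P_jP_j'$ together with the lower bound $\sigma_{\min}\lrp{\sum_j P_jP_j'}\ge\sum_j\sigma_{\min}(P_j)^2$. The only cosmetic differences are that the paper states the recombination step as a standalone lemma (its Lemma~\ref{lem:decompose}) rather than invoking Weyl, and it assumes $\Tf$ divisible by $L+1$ so that all pieces have equal width, whereas you track the individual $q_j$ and use $\sum_j q_j=N\Tf$ directly.
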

\vspace{0.5em}

\paragraph{Effect of noise}
The quantity $\opnorm{WU'}$ can be bounded with high probability by using the same partitioning of $W$ and $U$ such that the submatrix of each partition is composed of zero-mean independent subgaussian entries, for which we can apply subgaussian concentration inequalities. 
\begin{proposition}\label{prop:WU-bound}
For any $\delta \in (0,1)$, with probability at least $1-\delta$, when $\Tf N/L \gtrsim \lrp{p+L(n+m)} \ln(L/\delta)$,
\begin{align}
\opnorm{WU'} \lsim \sigma_w\sigma_u \sqrt{N\Tf L(p+L(n+m) + \ln(L/\delta))}.
\end{align}
\end{proposition}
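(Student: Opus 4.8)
The plan is to follow the partitioning strategy already used for \cref{prop:UU-bound}: split the columns of $W$ and $U$ so that within each piece the relevant random vectors are mutually independent, bound each resulting sub-product by a covering argument, and recombine via the triangle inequality together with a union bound. Observe that the $t$-th column $\wbar_{i,t}$ of $W_i$ depends only on $w^{(2)}_{i,t},w^{(1)}_{i,t-1},\dots,w^{(1)}_{i,t-L}$, and the $t$-th column $\ubar_{i,t}$ of $U_i$ depends only on $u_{i,t-L},\dots,u_{i,t}$; each uses a window of $L+1$ consecutive time steps. Partitioning the indices $t\in\{L,\dots,\Tbar\}$ by residue modulo $L+1$ gives $L+1$ index sets $\mathcal S_0,\dots,\mathcal S_L$ of size $\asymp \Tf/(L+1)$ such that, within any $\mathcal S_a$, the columns $\{\ubar_{i,t}\}$ are mutually independent and the columns $\{\wbar_{i,t}\}$ are mutually independent (windows are spaced by multiples of $L+1$, and trajectories are independent), while the inputs remain independent of the noises throughout. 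Letting $W_{(a)},U_{(a)}$ be the corresponding submatrices, $WU'=\sum_{a=0}^{L}W_{(a)}U_{(a)}'$, so $\opnorm{WU'}\le\sum_{a=0}^{L}\opnorm{W_{(a)}U_{(a)}'}$.

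Fix $a$ and write $W_{(a)}U_{(a)}'=\sum_{(i,t)\in\mathcal S_a}\wbar_{i,t}\ubar_{i,t}'$, a sum of independent mean-zero matrices (mean zero since $\wbar_{i,t}$ is centered and independent of $\ubar_{i,t}$). I would bound $\opnorm{W_{(a)}U_{(a)}'}=\sup\{x'W_{(a)}U_{(a)}'y:\|x\|=\|y\|=1\}$ via $\tfrac14$-nets of the spheres $S^{p+Ln-1}$ and $S^{m(L+1)-1}$. For fixed unit $x,y$, the scalar $x'W_{(a)}U_{(a)}'y=\sum_{(i,t)\in\mathcal S_a}(x'\wbar_{i,t})(\ubar_{i,t}'y)$ is a sum of $|\mathcal S_a|$ independent products of a subgaussian variable (variance proxy $\lesssim\sigma_w^2$, using that $\wbar_{i,t}$ stacks independent subgaussian blocks) and a Gaussian variable (variance $\lesssim\sigma_u^2$), hence subexponential with parameter $\lesssim\sigma_w\sigma_u$; Bernstein's inequality gives deviations of order $\sigma_w\sigma_u(\sqrt{|\mathcal S_a|\,s}+s)$ at level $s$. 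Choosing $s\asymp p+L(n+m)+\ln(L/\delta)$ to absorb the $\exp(c(p+Ln))\exp(c\,m(L+1))$ net cardinality and the allotted failure probability $\delta/(L+1)$, and using $|\mathcal S_a|\asymp N\Tf/L$, we obtain with probability at least $1-\delta/(L+1)$
\begin{align}
\opnorm{W_{(a)}U_{(a)}'}\lsim\sigma_w\sigma_u\sqrt{\tfrac{N\Tf}{L}\bigl(p+L(n+m)+\ln(L/\delta)\bigr)};
\end{align}
the hypothesis $\Tf N/L\gtrsim(p+L(n+m))\ln(L/\delta)$ is precisely what forces $|\mathcal S_a|\gtrsim s$, so that the $\sqrt{|\mathcal S_a|\,s}$ term dominates the stray linear term $s$.

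Summing over $a=0,\dots,L$ and union-bounding over the $L+1$ events, the factor $(L+1)\cdot\sqrt{1/L}\asymp\sqrt L$ yields
\begin{align}
\opnorm{WU'}\lsim\sigma_w\sigma_u\sqrt{N\Tf\,L\bigl(p+L(n+m)+\ln(L/\delta)\bigr)}
\end{align}
with probability at least $1-\delta$, as claimed. The main obstacle is the per-block concentration step: one must record a clean subgaussian–Gaussian product inequality that simultaneously tracks both ambient dimensions ($p+Ln$ from $W$, $m(L+1)$ from $U$) and has the correct $\sqrt{|\mathcal S_a|}$ scaling, and then verify that the stated sample-size hypothesis places us in the regime where the square-root term, rather than the linear deviation term, controls the bound; reconciling the unequal variance proxies of $w^{(1)}$ and $w^{(2)}$ and the $O(1)$ rounding in $|\mathcal S_a|$ is routine bookkeeping.
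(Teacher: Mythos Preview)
Your proposal is correct and follows essentially the same approach as the paper: partition the columns of $W$ and $U$ into $L+1$ Page-type submatrices with mutually independent columns, bound each $\opnorm{W_{(a)}U_{(a)}'}$ via $\tfrac14$-nets plus Bernstein, union-bound over the $L+1$ pieces, and sum. The paper packages the per-block step as a standalone lemma (the product of two independent subgaussian-entry matrices), whereas you sketch the same covering-plus-Bernstein argument inline; your handling of the unequal variance proxies of $w^{(1)}$ and $w^{(2)}$ is in fact slightly more careful than the paper's phrasing, but the substance is identical.
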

\vspace{0.5em}

\paragraph{Effect of truncated impulse response}
The $EU'$ term in the estimation error features $CA^L x_{i,t-L}$, which arises from the fact that the length $L$ truncated impulse response does not account for the effect of the unobserved states $x_{i,t-L}$, for $i \in [N], t \in [L:\Tbar]$. Our approach in bounding $\opnorm{EU'}$ is similar to that of \cite[Theorem 4.2]{oymak2021revisiting}, but with a different presentation of the results.

\begin{proposition}\label{prop:EU-bound}
For $\delta \in (0,1/e)$, with probability at least $1 - \delta$, 
\begin{align}
	\opnorm{EU'} & \lsim \sigma_u \sqrt{NTL} \opnorm{CA^L} \opnorm{\Gamma_{\infty}}^{1/2}\\
	&\quad\quad \cdot \lrp{p+mL+\ln(NT/\delta)}.
\end{align}
\end{proposition}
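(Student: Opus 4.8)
The plan is to reduce the bound to the concentration of $L+1$ matrix martingale sums, in the same spirit as the Page-matrix decompositions used for Propositions~\ref{prop:UU-bound} and~\ref{prop:WU-bound}. Expanding the product, $EU' = \sum_{i=1}^{N}\sum_{c=0}^{\Tf-1}(CA^{L}x_{i,c})\big(\ubar_{i,c+L}^{(L)}\big)'$, where the windowed regressor $\ubar_{i,c+L}^{(L)}=[\,u_{i,c+L}'\ \ u_{i,c+L-1}'\ \ \cdots\ \ u_{i,c}'\,]'$ is built from $L+1$ length-$m$ blocks. Overlapping windows rule out a direct appeal to independence, so I would split along this block structure: $EU'$ equals the horizontal concatenation $[\,N_{L}\mid N_{L-1}\mid\cdots\mid N_{0}\,]$ of $p\by m$ blocks $N_{k}:=\sum_{i=1}^{N}\sum_{c=0}^{\Tf-1}(CA^{L}x_{i,c})\,u_{i,c+k}'$. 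Since $\opnorm{EU'}^{2}=\opnorm{\sum_{k=0}^{L}N_{k}N_{k}'}\le(L+1)\max_{k}\opnorm{N_{k}}^{2}$, it suffices to show that each $\opnorm{N_{k}}\lesssim\sigma_{u}\sqrt{N\Tf}\,\opnorm{CA^{L}}\,\opnorm{\Gamma_{\infty}}^{1/2}\big(p+mL+\ln(NT/\delta)\big)$ holds with probability $1-\delta/(L+2)$, then union bound.

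For a fixed $k$, reindexing by $\tau=c+k$ gives $N_{k}=\sum_{i=1}^{N}\sum_{\tau}(CA^{L}x_{i,\tau-k})\,u_{i,\tau}'$. Since $x_{i,0}=0$ and the state at any time is a function of the inputs and process noise strictly preceding it, $x_{i,\tau-k}$ depends only on $\{u_{i,s},w^{(1)}_{i,s}:s\le\tau-k-1\}\subseteq\{u_{i,s},w^{(1)}_{i,s}:s\le\tau-1\}$, which remains true when $k=0$. Hence $CA^{L}x_{i,\tau-k}$ is predictable with respect to the natural filtration and independent of the fresh, zero-mean, $\sigma_{u}^{2}$-isotropic input $u_{i,\tau}$; concatenating the $N$ mutually independent trajectories, $N_{k}$ is a sum of matrix martingale differences. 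I would then bound $\opnorm{N_{k}}$ with a matrix martingale tail inequality (matrix Freedman, or a self-normalized bound), which, conditioned on the regressors, controls $\opnorm{N_{k}}$ at the scale $\sigma_{u}\sqrt{\opnorm{\sum_{i,\tau}(CA^{L}x_{i,\tau-k})(CA^{L}x_{i,\tau-k})'}\cdot\big(p+m+\ln(1/\delta')\big)}$ together with an almost-surely bounded, lower-order term.

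It remains to control the random predictable energy $\opnorm{\sum_{i,c}(CA^{L}x_{i,c})(CA^{L}x_{i,c})'}\le\opnorm{CA^{L}}^{2}\,\opnorm{\sum_{i,c}x_{i,c}x_{i,c}'}$. Writing $x_{i,c}=\sum_{j=0}^{c-1}A^{c-1-j}(Bu_{i,j}+w^{(1)}_{i,j})$, each state is subgaussian with $\E[x_{i,c}x_{i,c}']=\sum_{j=0}^{c-1}\big(\sigma_{u}^{2}A^{j}B(A^{j}B)'+\sigma_{w^{(1)}}^{2}A^{j}(A')^{j}\big)\preceq\Gamma_{\infty}$, which is exactly how the Gramian-like quantity enters. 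A matrix-concentration bound on the sum of these $N\Tf$ outer products — independent across $i$, with within-trajectory dependence handled by the same blocking, or simply by treating $\sum_{i,c}\enorm{x_{i,c}}^{2}$ as a sum of weakly dependent subexponentials — gives $\opnorm{\sum_{i,c}x_{i,c}x_{i,c}'}\lesssim N\Tf\,\opnorm{\Gamma_{\infty}}$ with probability $1-\delta/(L+2)$. Substituting back, union bounding over the $L+1$ blocks and the energy event, and relaxing $\ln(L/\delta)$ and the remaining log-factors to $\ln(NT/\delta)$ (and crudely absorbing $\sqrt{m(p+m+\ln(NT/\delta))}$ and a $\min(n,p)$-type factor into $p+mL+\ln(NT/\delta)$) yields the claim.

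The main obstacle is the first step: the block-Toeplitz structure of $\bU$ makes $EU'$ a sum of strongly dependent terms, and the $L+1$-way split along the window is precisely what renders each piece an honest martingale whose regressor $CA^{L}x_{i,\tau-k}$ is predictable and independent of the innovation $u_{i,\tau}$. The secondary technical point is that the predictable quadratic variation is itself a random matrix requiring an operator-norm bound; its expectation is governed by $\Gamma_{\infty}$, and the mild slack relative to an optimal bound — notably a square root traded for the first power of $p+mL+\ln(NT/\delta)$ — is absorbed into the generous form of the statement.
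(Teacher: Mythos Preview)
Your outline is sound and uses a genuinely different decomposition from the paper's. The paper partitions the \emph{columns} of $E$ and $U$ into $L+1$ Page submatrices $E_{\Jscr_j},U_{\Jscr_j}$ so that within each piece the full window $\ubar_{i,\tau(j,k)}$ is independent of the paired state $x_{i,\tau(j,k-1)}$; it then reduces $\opnorm{E_{\Jscr_j}U_{\Jscr_j}'}$ to scalar sums via $1/4$-nets on $\Sscr^{p-1}\times\Sscr^{m(L+1)-1}$, bounds each scalar $\brk{a,CA^{L}x_{i,\tau}}^{2}$ individually by Hanson--Wright (Lemma~\ref{lem:CAx-bound}), and applies the scalar subgaussian-martingale lemma \cite[Lemma~4.4]{oymak2021revisiting}. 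You instead split along the $L+1$ \emph{row blocks} of $\ubar$, keeping all $N\Tf$ time steps in each $N_{k}$ and using the single input $u_{i,\tau}$ as the fresh innovation. Both routes exploit the same predictability of $x_{i,\tau-k}$ relative to $u_{i,\tau}$ and arrive at the $\sqrt{NTL}$ prefactor; your version is arguably more direct and, if the energy bound went through without a logarithm, would even sharpen the dimension dependence to a square root of $p+mL+\ln(NT/\delta)$.

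Two loose ends should be tightened. First, matrix Freedman requires almost-surely bounded increments, and the rank-one Gaussian terms $(CA^{L}x_{i,\tau-k})u_{i,\tau}'$ are not bounded, so the phrase ``almost-surely bounded, lower-order term'' is misplaced; you need a truncation step, a self-normalized inequality that genuinely accommodates subgaussian innovations, or a covering reduction to scalar martingales, which is precisely the paper's device. Second, the high-probability bound $\opnorm{\sum_{i,c}x_{i,c}x_{i,c}'}\lsim N\Tf\opnorm{\Gamma_{\infty}}$ is not free: the states along a trajectory are correlated, and a Page-style column split only reduces that correlation (via mixing under $\rho(A)<1$) rather than removes it. The simplest fix is again the paper's---bound each $\enorm{CA^{L}x_{i,c}}^{2}$ separately by Hanson--Wright and union-bound over all $N\Tf$ indices, at the cost of the extra $\ln(NT/\delta)$ that the stated bound already absorbs.
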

\vspace{0.5em}
The proofs of Propositions \ref{prop:UU-bound}, \ref{prop:WU-bound} and \ref{prop:EU-bound} can be found in Sections \ref{sec:proof:UU-bound}, \ref{sec:proof:WU-bound} and \ref{sec:proof:EU-bound}, respectively. 

\section{Simulations}
We evaluate the performance of \cref{alg:sysid} in estimating Markov parameters of clusters of linear systems 
through simulations. 
$K$ single-input single-output state space models of order $n=3$ were randomly generated, and simulated with process and observation noise to generate $N$ unlabeled observations of length $T$ input-output trajectories from each of the $K$ different models. The $A$ matrices were randomly generated and scaled to have their spectral radii between $0.6$ and $0.9$, while $B$ and $C$ matrices were scaled to have unit spectral norm. Inputs were drawn as i.i.d.\ isotropic gaussians with $\sigma_u = 1$, while process and measurement noise levels were $\sigma_{w\u1} = 0.15, \sigma_{w\u2} = 0.2$, respectively. 

Figure \ref{fig:err_versus_T_and_cluster_width} shows the average model estimation error $\sum_{k=1}^K \lVert \ghat_k^{(L)} - g_k^{(L)}\rVert_2/K$ for the first $L=4$ Markov parameters of the $K=3$ cluster models, plotted against $T$ for different values of $N$, and for varying cluster widths, i.e., distances between systems in the same cluster. 
The estimation error decreases as both the trajectory length $T$ and the number of trajectories $N$ increase, which is also seen in the plot of level sets of the estimation error in Figure \ref{fig:err_level_set}. This indicates that even with short trajectories, increasing $N$ can improve cluster estimation. The algorithm is robust to moderate cluster widths, though as expected, the performance degrades as the cluster widths approach the cluster separations.
\begin{figure}[h]
	\centering
	\includegraphics[width=\columnwidth]{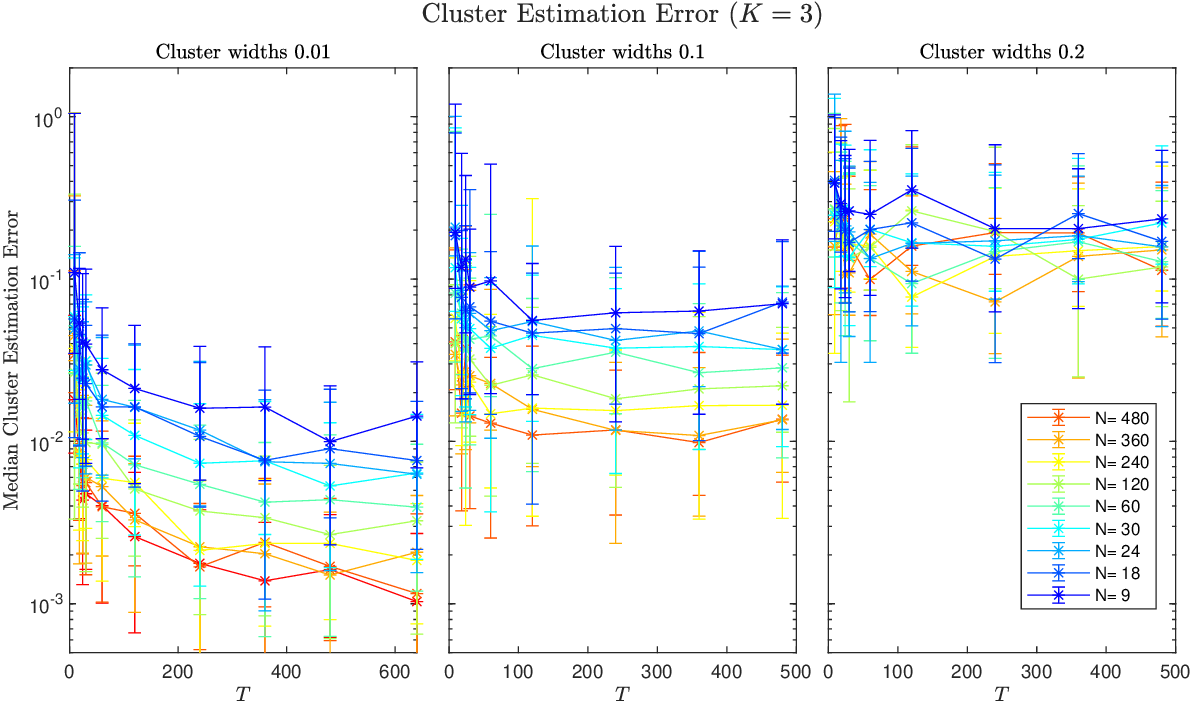}
	\caption{Model estimation error vs. trajectory length $T$, for various values of $N$. From the left to right panel, the cluster widths increase. 20 trials per point. 25th to 75th percentile error bars are plotted.}
	\label{fig:err_versus_T_and_cluster_width}
\end{figure}

\begin{figure}[hb]
	\centering
	\includegraphics[width=1\columnwidth]{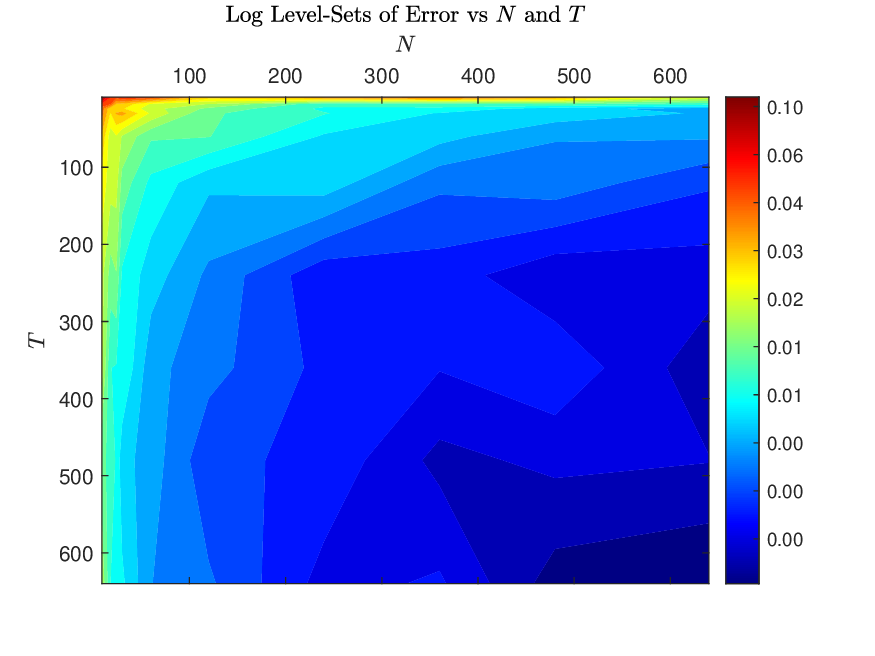}
	\vspace{-0.8cm}
	\caption{Level sets of model estimation error in $N$ and $T$, for the same setting in Fig. \ref{fig:err_versus_T_and_cluster_width}. Error decreases in both $N$ and $T$.}
	\label{fig:err_level_set}
\end{figure}

Finally, we compare the performance of Algorithm \ref{alg:sysid} with a moment-based approach for estimating mixtures of linear systems. The algorithm, proposed in \cite{rui2024finite}, extends spectral methods for mixtures of linear regression and involves decomposing a third-order tensor constructed the data.
Figure \ref{fig:comparison} plots the average model estimation error for both methods in estimating the first $L=7$ Markov parameters of $K=3$ models from $N$ trajectories of length $T$, where $N$ and $T$ vary.  While the moment-based approach comes with theoretical global convergence guarantees, in contrast to the local clustering-based analysis in this paper, Algorithm \ref{alg:sysid} is empirically more sample efficient and accurate.

\begin{figure}[hbtp]
	\centering
	\includegraphics[width=\columnwidth]{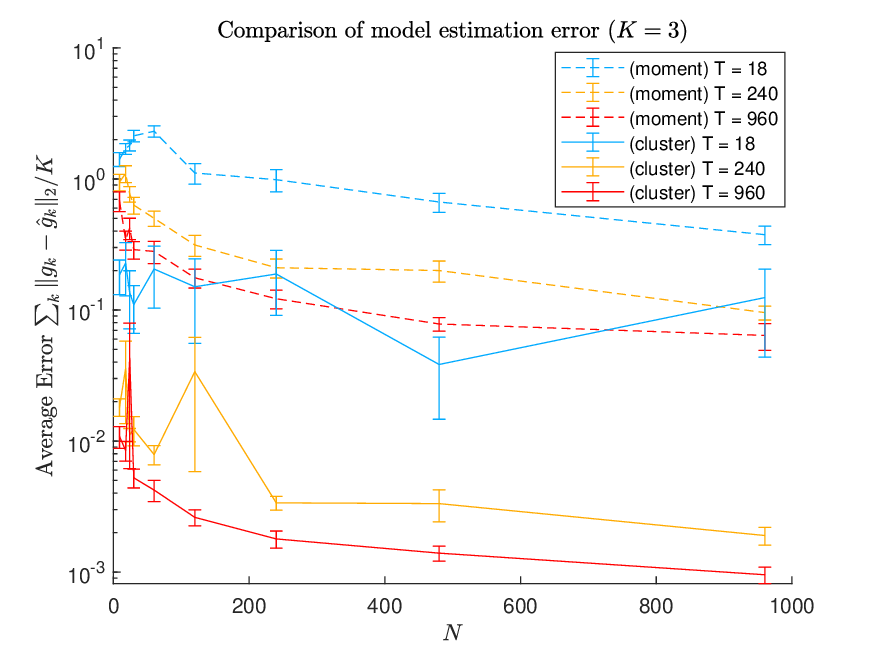}
	\caption{Comparison of model estimation error of Algorithm \ref{alg:sysid} (``cluster'') with that of a moment-based tensor decomposition estimator (``moment'') for mixtures of linear systems \cite{rui2024finite}, for various trajectory lengths $T$, plotted against the number $N$ of observed trajectories. Standard errors for 15 trials are shown.}
	\label{fig:comparison}
\end{figure}

\bibliographystyle{IEEEtran}
\bibliography{regression_problem_bib}

\onecolumn
\section{Appendix} 

\subsection{Ho-Kalman Realization}
We may recover a balanced minimal realization $(C, A, B, D)$ for an order $n$ system given its Markov parameters $\set{D, CB, CAB, ...}$ by using the classic Ho-Kalman realization algorithm \cite{ho1966effective}, which factors the following Hankel matrix formed by the Markov parameters:
\begin{align}
\Hscr\u{L,L} = \matb{CB & CAB & \cdots & CA^{L-1}B \\
CAB & CA^2B & \cdots & CA^{L}B \\ 
\vdots & & \ddots & \\
CA^{L-1}B & CA^{L}B & \cdots & CA^{2L-2}B}.
\end{align}
When $L$ is large enough (i.e., $L \geq n$), $\Hscr\u{L,L}$ will be rank $n$, and we can obtain the observability and controllability matrices by factorization of $\Hscr\u{L,L}$ via singular value decomposition, and in turn recover the matrices $C$ and $B$, and finally solve for $A$. 

To estimate an order $n$ realization from set of the first $2L+2$ estimated Markov parameters in $\Ghat\u{2L+1} =\matb{\Ghat_0 & \Ghat_1 \cdots & \Ghat_{2L} & \Ghat_{2L+1}}$, note that $\Dhat$ can be read off as $\Ghat_0$, and for the remaining terms, we similarly form the Hankel matrix of Markov parameters: 
\begin{align}
\hat{\Hscr}\u{L,L}  = \matb{\Ghat_0 & \Ghat_1 & \cdots & \Ghat_L \\
\Ghat_1 & \Ghat_2 & \cdots & \Ghat_{L+1} \\ 
\vdots & & \ddots & \\
\Ghat_L & \Ghat_{L+1} & \cdots & \Ghat_{2L}}.
\end{align}
Using the Ho-Kalman algorithm given in \cite[Algorithm 1]{oymak2021revisiting}, we can recover an estimated realization $(\Chat, \Bhat, \Ahat)$ when $L \geq n$. \cite{oymak2021revisiting} analyzes the robustness of the Ho-Kalman algorithm with respect to the estimation error of the first $2L+2$ Markov parameters. Their result, tailored to our setting, is stated in Proposition \ref{prop:ho-kalman}.

\begin{proposition}[{\cite[Theorem 5.2]{oymak2021revisiting}}]\label{prop:ho-kalman}
Given a system of order $n$, let $L\geq n$ and suppose we have estimates of the first $(2L+2)$ Markov parameters of a system, concatenated in the matrix $\Ghat\u{2L+1}$. 
Define $\Hscr^{-}$ and $\hat{\Hscr}^{-}$ to be the submatrices consisting of the first $L$ block columns of $\Hscr\u{L,L}$ and $\hat{\Hscr}\u{L,L}$, respectively.
Let $G\u{2L+1}$ be the matrix of the true first $(2L+2)$ Markov parameters, 
and suppose that $$\opnorm{G\u{2L+1} - \Ghat\u{2L+1}} \leq \sqrt{n}\sigma_n(\Hscr^{-})$$
Then the Ho-Kalman algorithm \cite[Algorithm 1]{oymak2021revisiting} on $\Ghat\u{2L+1}$ yields an order $n$ realization $(\Chat, \Ahat, \Bhat, \Dhat)$ such that there exists an invertible matrix $Q \in \R^{n \by n}$ such that 
\begin{align}
	\Fnorm{A - \inv{Q}\Ahat Q} &\lsim \frac{ \sqrt{nL}\opnorm{\Hscr\u{L,L}}}{\sigma_n^2(\Hscr^{-})} \opnorm{G\u{2L+1} - \Ghat\u{2L+1}} \\
	\max \set{\Fnorm{C - \Chat Q}, \Fnorm{B - \inv{Q} \Bhat}} &\lsim \frac{\sqrt{nL}\opnorm{G\u{2L+1} - \Ghat\u{2L+1}}}{\sqrt{\sigma_n(\Hscr^{-})}} \\
	\Fnorm{D - \Dhat} & \lsim \sqrt{n} \opnorm{G\u{2L+1} - \Ghat\u{2L+1}}.
\end{align}
\end{proposition}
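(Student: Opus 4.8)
The plan is to track the Markov-parameter error $\eta \coloneq \opnorm{G\u{2L+1} - \Ghat\u{2L+1}}$ as it propagates through the three stages of the Ho-Kalman algorithm: formation of the block Hankel matrix, its rank-$n$ singular value factorization, and the extraction of the system matrices. The feedthrough bound is immediate, since $\Dhat = \Ghat_0$ and $D = G_0$ are read directly off the Markov parameters and $G_0 - \Ghat_0$ is a submatrix of $G\u{2L+1} - \Ghat\u{2L+1}$, giving $\Fnorm{D - \Dhat} \leq \sqrt{n}\,\opnorm{G_0 - \Ghat_0} \leq \sqrt{n}\,\eta$. For the remaining matrices, I first pass from the parameter error to a Hankel error: because each perturbed block $G_k - \Ghat_k$ occupies a single block anti-diagonal of the Hankel matrix, splitting $\Hscr^{-} - \hat{\Hscr}^{-}$ into shifted copies of $G\u{2L+1} - \Ghat\u{2L+1}$ (equivalently, invoking a Hankel-to-Markov operator-norm bound) yields $\opnorm{\Hscr^{-} - \hat{\Hscr}^{-}} \lsim \eta$, and likewise for the shifted submatrix $\Hscr^{+}$ (the trailing $L$ block columns) used later to recover $A$.

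Next I would establish the factorization perturbation. Minimality together with $L \geq n$ guarantees that $\Hscr^{-}$ has rank exactly $n$, so $\sigma_{n+1}(\Hscr^{-}) = 0$ and the balanced rank-$n$ factors $\Oscr = U_n \Sigma_n^{1/2}$ and $\Cscr = \Sigma_n^{1/2} V_n'$ (the observability and controllability matrices, up to a common similarity) are well defined from the SVD $\Hscr^{-} = U_n \Sigma_n V_n'$. The hypothesis $\eta \leq \sqrt{n}\,\sigma_n(\Hscr^{-})$ keeps the perturbation controlled relative to $\sigma_n(\Hscr^{-})$, so that by Weyl's inequality the truncated SVD of $\hat{\Hscr}^{-}$ still isolates an $n$-dimensional dominant subspace with a controlled spectral gap. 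Applying Wedin's $\sin\Theta$ theorem to the singular subspaces, followed by a balanced-factor perturbation lemma of the type used in low-rank matrix factorization, produces an orthogonal $Q \in \R^{n \by n}$ with
\begin{align}
\max \set{\Fnorm{\Oscr - \hat{\Oscr} Q}, \Fnorm{\Cscr - \inv{Q}\hat{\Cscr}}} \lsim \frac{\opnorm{\Hscr^{-} - \hat{\Hscr}^{-}}}{\sqrt{\sigma_n(\Hscr^{-})}} \lsim \frac{\eta}{\sqrt{\sigma_n(\Hscr^{-})}}.
\end{align}
Since $C$ is the first block row of $\Oscr$ and $B$ the first block column of $\Cscr$, passing to these submatrices only decreases the Frobenius norm, so the $\max \set{\Fnorm{C - \Chat Q}, \Fnorm{B - \inv{Q}\Bhat}}$ bound follows with the same $Q$, the $\sqrt{nL}$ factor entering through the conversion between operator and Frobenius norms and the block dimensions of the factors.

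The recovery of $A$ is where I expect the main obstacle. It exploits the shift structure $\Hscr^{+} = \Oscr A \Cscr$, so that $A = \pseudo{\Oscr}\,\Hscr^{+}\,\pseudo{\Cscr}$ while the algorithm computes $\Ahat = \pseudo{\hat{\Oscr}}\,\hat{\Hscr}^{+}\,\pseudo{\hat{\Cscr}}$. Because $Q$ is orthogonal and $\hat{\Oscr}, \hat{\Cscr}$ have full rank, $\pseudo{(\hat{\Oscr} Q)} = \inv{Q}\pseudo{\hat{\Oscr}}$ and $\pseudo{(\inv{Q}\hat{\Cscr})} = \pseudo{\hat{\Cscr}} Q$, which is exactly why the aligned estimate appears as $\inv{Q}\Ahat Q$. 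I would write $A - \inv{Q}\Ahat Q$ as a telescoping sum over the three factors and apply the product rule, controlling (i) the perturbation of $\Hscr^{+}$ by $\eta$ while bounding $\opnorm{\Hscr^{+}} \leq \opnorm{\Hscr\u{L,L}}$, and (ii) the perturbations of the two pseudoinverses. Since $\sigma_n(\Oscr) = \sigma_n(\Cscr) = \sqrt{\sigma_n(\Hscr^{-})}$, each pseudoinverse has norm $\sigma_n(\Hscr^{-})^{-1/2}$ and the same-rank pseudoinverse perturbation bound contributes a further $\sigma_n(\Hscr^{-})^{-1/2}$; combined, the two pseudoinverses produce the $\sigma_n^2(\Hscr^{-})$ in the denominator and $\opnorm{\Hscr\u{L,L}}$ in the numerator, yielding
\begin{align}
\Fnorm{A - \inv{Q}\Ahat Q} \lsim \frac{\sqrt{nL}\,\opnorm{\Hscr\u{L,L}}}{\sigma_n^2(\Hscr^{-})}\,\eta.
\end{align}
The crux is twofold: ensuring that a single orthogonal $Q$ simultaneously aligns $\Oscr$, $\Cscr$, and the $A$-recovery (the factorization being unique only up to such a transformation), and carefully bounding the two pseudoinverse perturbations, since it is precisely these two factors in the $A$ formula that generate the more demanding $1/\sigma_n^2$ dependence rather than the $1/\sqrt{\sigma_n}$ appearing for $C$ and $B$.
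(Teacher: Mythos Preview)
The paper does not prove this proposition at all: it is quoted verbatim (``tailored to our setting'') from \cite[Theorem~5.2]{oymak2021revisiting} and used as a black box in the end-to-end argument, so there is no in-paper proof to compare your proposal against. Your sketch is in fact a faithful outline of the argument in the cited source---Hankel perturbation from Markov perturbation, Wedin/Weyl control of the rank-$n$ SVD under the gap hypothesis, balanced-factor alignment via an orthogonal $Q$, and a telescoping product bound for $A$ through the shift relation $\Hscr^{+}=\Oscr A\Cscr$---and the mechanisms you identify for the $1/\sigma_n^2(\Hscr^{-})$ and $\opnorm{\Hscr\u{L,L}}$ factors in the $A$ bound are the right ones. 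One small slip: in the $D$ bound you write $\Fnorm{D-\Dhat}\le\sqrt{n}\,\opnorm{G_0-\Ghat_0}$, but $D\in\R^{p\times m}$, so the Frobenius-to-operator conversion gives a factor $\sqrt{\min(p,m)}$ (or $\sqrt{\mathrm{rank}}$), not $\sqrt{n}$; the $\sqrt{n}$ in the stated bound is an artifact of how the result was transcribed here rather than something that falls out of your argument.
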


\subsection{Bounding \texorpdfstring{$UU'$}{UU'}}\label{sec:proof:UU-bound}
\begin{proof}[Proof of Proposition \ref{prop:UU-bound}]
Note that $U$ is a concatenation of $N$ block-Toeplitz matrices $U_i$, for $i \in [N]$:
\begin{align}\label{eq:toeplitz}
U_i = \lrb{\ubar_{i,L}\ \cdots \ \ubar_{i,\Tbar}} = \matb{u_{i,L} & u_{i,L+1} & \cdots & u_{i,\Tbar}  \\
u_{i,L-1} & u_{i,L} & \cdots & u_{i,\Tbar-1} \\
\vdots & \ddots & & \vdots \\
u_{i,0} & u_{i,1} & \cdots & u_{i,\Tbar-L}},
\end{align}
which is a $(L+1)m \by \Tf$ matrix, 
recalling that $\Tf = \Tbar-L+1$ is the ``effective'' number of samples per trajectory to estimate the first $L+1$ Markov parameters.

We partition $U$ into $L+1$ embedded Page matrices. 
For each $k \in [0{:}L]$,
define the index set $\Jscr_k = \set{k + j(L+1) -1 \mid j = [d]}$, where $d = \floor{\Tf/(L+1)}$. For simplicity, let us assume that $\Tf$ is divisible by $(L+1)$ since we can otherwise truncate $\Tf$ to the preceding closest multiple of $(L+1)$. Then for example, for system $i$ and for the $k$th index set, we have the submatrix indexed by $\Jscr_k$,
\begin{align}\label{eq:page-partition}
U_{i,\Jscr_k} = \matb{\ubar_{i, L+k} & \ubar_{i,2(L+1)-1+k} & \cdots & \ubar_{i,d(L+1)-1+k}},
\end{align}
where the $j$th column is $\ubar_{j(L+1)-1+k}$.
Let $U_{\Jscr_k} \coloneq \lrb{U_{1,\Jscr_k} \cdots U_{N,\Jscr_k}}$ be the matrix formed by concatenating matrices $U_{i,\Jscr_k}$ for $i \in [N]$.
We can then bound the singular values of $U$ in terms of the singular values of the constituent matrices $\tilde{U}_k$, $k \in [0{:}L]$, and in particular, by Lemma \ref{lem:decompose},
\begin{align}\label{eq:pf-sum}
\sigma_{\min}^2(U) \geq \sum_{k = 0}^L \sigma_{\min}^2 \lrp{\tilde{U}_{i, k}}.
\end{align}
 
For each $k \in [0{:}L]$, $U_{\Jscr_k}$ is a $m(L+1)\by dN$ Page matrix and has independent, identically distributed elements with distribution $\Nscr(0, \sigma_u^2)$.
Applying known bounds on the singular values of Gaussian random matrices $\tilde{U}_k$ (Lemma \ref{lem:gaussian-matrix-sv}) %
with a union bound over $k \in [0{:}L]$, we get that when $dN\geq m(L+1)$, it holds that
\begin{align}
1-\frac{\sigma_{\min}(U_{\Jscr_k})}{\sigma_u \sqrt{dN}} \leq \sqrt{\frac{m(L+1)}{dN}} + \sqrt{\frac{2\ln(2(L+1)/\delta)}{dN}}
\end{align}
for all $k \in [0{:}L]$ 
with probability at least $1-\delta$.
We upper bound the right hand side of this inequality by $\alpha$, and note that $\sigma_{\min}(U_{\Jscr_k}) \geq (1-\alpha) \sigma_u \sqrt{dN}$ with probability at least $1-\delta$ when 
\begin{align}
dN \geq \frac{2}{\alpha^2} \lrp{m(L+1) + 2\ln(2(L+1)/\delta)}.
\end{align}
Plugging these bounds into \eqref{eq:pf-sum} and recalling $d = \Tf/(L+1)$ completes the proof.
\end{proof}

\begin{lemma}\label{lem:decompose}
Let $U$ be an $m\by n$ matrix, with $n > m$, and let $(\Iscr_1, ..., \Iscr_q)$ be a partition of the index set $\set{1, 2, ..., n}$. If $A_k$ is the $m \by \abs{\Iscr_k}$ matrix whose columns correspond to the columns of $U$ indexed by $\Iscr_k$, then 
\begin{align}
\sqrt{\sum_{i=1}^q \sigma_{\min}^2 (A_i)} \leq \sigma_{\min}(U) \leq \sigma_{\max}(U) \leq \sqrt{\sum_{i=1}^q \sigma_{\max}^2 (A_i)}.
\end{align} 
\end{lemma}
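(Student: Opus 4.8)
The plan is to reduce the whole statement to the Gram-matrix identity $UU' = \sum_{i=1}^q A_iA_i'$. This holds because $\Iscr_1,\dots,\Iscr_q$ partition $\set{1,\dots,n}$: writing $u_1,\dots,u_n$ for the columns of $U$ we have $UU' = \sum_{j=1}^n u_ju_j'$, and regrouping the sum according to the partition yields $\sum_{i=1}^q A_iA_i'$. Since $U$ is $m\by n$ with $n>m$, the $m$ eigenvalues of the positive semidefinite matrix $UU'$ are exactly $\sigma_1(U)^2\ge\dots\ge\sigma_m(U)^2$, so $\sigma_{\min}(U)^2 = \lambda_{\min}(UU')$ and $\sigma_{\max}(U)^2 = \lambda_{\max}(UU')$. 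It therefore suffices to bound the extreme eigenvalues of $\sum_i A_iA_i'$, a sum of positive semidefinite matrices.

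Both outer inequalities then follow from the Courant--Fischer characterization. For any unit vector $x\in\R^m$, $x'UU'x = \sum_i x'A_iA_i'x = \sum_i \opnorm{A_i'x}^2$. Taking the maximum over $\opnorm{x}=1$ and moving it inside the sum gives $\sigma_{\max}(U)^2 \le \sum_i \max_{\opnorm{x}=1}\opnorm{A_i'x}^2 = \sum_i \sigma_{\max}(A_i)^2$; taking the minimum and moving it inside gives $\sigma_{\min}(U)^2 \ge \sum_i \min_{\opnorm{x}=1}\opnorm{A_i'x}^2$. Taking square roots yields the first and last inequalities, and $\sigma_{\min}(U)\le\sigma_{\max}(U)$ is trivial. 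Equivalently, one may invoke Weyl's inequality in the form $\lambda_{\max}(\sum_i A_iA_i')\le \sum_i\lambda_{\max}(A_iA_i')$ and $\lambda_{\min}(\sum_i A_iA_i')\ge\sum_i\lambda_{\min}(A_iA_i')$ for positive semidefinite summands.

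The one point that needs care --- and the only place the hypotheses matter --- is identifying $\min_{\opnorm{x}=1}\opnorm{A_i'x}^2$ with $\sigma_{\min}(A_i)^2$. This is an equality precisely when $A_i$ has at least $m$ columns, i.e.\ $\abs{\Iscr_i}\ge m$; when $\abs{\Iscr_i}<m$ the quantity is $0\le\sigma_{\min}(A_i)^2$, so the lower bound should be read under this mild nondegeneracy, which is automatic in our application: in the proof of \cref{prop:UU-bound} each embedded Page block $U_{\Jscr_k}$ has at least as many columns as rows, since $dN\ge m(L+1)$. Past this bookkeeping there is no real obstacle: the lemma is a direct consequence of additivity of Gram matrices under a column partition together with super/subadditivity of $\lambda_{\min}/\lambda_{\max}$ over positive semidefinite matrices.
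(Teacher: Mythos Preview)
Your proof is correct and essentially identical to the paper's: both reduce to the variational identity $\enorm{v'U}^2 = \sum_k \enorm{v'A_k}^2$ for unit $v$ (your Gram-matrix formulation $UU' = \sum_i A_iA_i'$ is the same computation in different notation), then bound each summand between $\sigma_{\min}(A_k)^2$ and $\sigma_{\max}(A_k)^2$. Your careful treatment of the edge case $\abs{\Iscr_i} < m$ is a point the paper's proof glosses over, though as you correctly observe it does not arise in the application to \cref{prop:UU-bound}.
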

\begin{proof}
Without loss of generality, we assume that $U$ can be written as
\begin{align}
U = \matb{A_1 & \cdots & A_q},
\end{align}
since singular values are preserved under the elementary matrix operations of permuting columns. 
Next, for $i \in [m]$, since singular values are preserved under matrix transpose, let us consider 
\begin{align}
\sigma_{\min}(U) = \inf_{v \in \Sscr^{m-1}} \enorm{v'U},\\
\sigma_{\max}(U) = \sup_{v \in \Sscr^{m-1}} \enorm{v'U}.
\end{align}
Since $v'U = \matb{v'A_1 & \cdots & v'A_q}$,
we can decompose $\enorm{v'U}^2 = \sum_{k=1}^q \enorm{v'A_k}^2$. But for each $k$, $\sigma_{\min}(A_k) \leq \enorm{v'A_k} \leq \sigma_{\max}(A_k)$. This implies that 
\begin{align}
\sum_{k=1}^q \sigma_{\inf}^2(A_k) \leq \min_{v \in \Sscr^{m-1}} \sum_{k=1}^q \enorm{A_k v}^2 = \sigma_{\min}^2 (U) \leq \sigma_{\max}^2(U) =\sup_{v \in \Sscr^{m-1}} \sum_{k=1}^q \enorm{A_k v}^2 \leq \sum_{k=1}^q \sigma_{\max}^2(U_k). 
\end{align}
Taking the square root of each term in the chain of inequalities gives the claim. 
\end{proof}

\begin{lemma}[Theorem 9.26 in \cite{foucart2013mathematical}]\label{lem:gaussian-matrix-sv}
Let $A$ be an $m \by n $ matrix with iid standard normal entries, with $n > m$. Then for all $t>0$,
\begin{align}
\Pr{\frac{\sigma_{\max}(A)}{\sqrt{n}} \geq 1 + \sqrt{m/n} + t} &\leq e^{-nt^2/2}, \\
\Pr{\frac{\sigma_{\min}(A)}{\sqrt{n}} \leq 1 - \sqrt{m/n} - t} &\leq e^{-nt^2/2}.
\end{align}
\end{lemma}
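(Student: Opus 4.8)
The statement is the standard sharp non-asymptotic bound on the extreme singular values of a Gaussian matrix, and the plan is to prove it by combining two classical ingredients: sharp bounds on the \emph{expected} extreme singular values, and Gaussian concentration of measure around those expectations. The observation that makes the exponent come out as $e^{-nt^2/2}$ is that, after normalizing by $\sqrt n$, the relevant maps have Lipschitz constant $1/\sqrt n$, and Gaussian concentration converts a Lipschitz constant $c$ into a tail of order $e^{-t^2/(2c^2)}$.

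For the concentration step I would first note that $A\mapsto\sigma_{\max}(A)$ and $A\mapsto\sigma_{\min}(A)$ are each $1$-Lipschitz with respect to the Frobenius norm: by Weyl's perturbation inequality $\abs{\sigma_k(A)-\sigma_k(B)}\le\opnorm{A-B}\le\Fnorm{A-B}$. Identifying $A\in\R^{m\by n}$ with the standard Gaussian vector of its $mn$ entries (whose Euclidean norm equals $\Fnorm{A}$), the maps $A\mapsto\sigma_{\max}(A)/\sqrt n$ and $A\mapsto\sigma_{\min}(A)/\sqrt n$ are $(1/\sqrt n)$-Lipschitz functions of a standard Gaussian vector. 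The Gaussian concentration inequality (Borell--TIS) then gives, for every $t>0$,
\[
\Pr{\tfrac{\sigma_{\max}(A)}{\sqrt n}\ge \Exp{\tfrac{\sigma_{\max}(A)}{\sqrt n}}+t}\le e^{-nt^2/2},\qquad \Pr{\tfrac{\sigma_{\min}(A)}{\sqrt n}\le \Exp{\tfrac{\sigma_{\min}(A)}{\sqrt n}}-t}\le e^{-nt^2/2}.
\]

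It then remains to control the expectations via a Gaussian comparison argument. I would view $X_{u,v}=\brk{u,Av}$ as a centered Gaussian process indexed by $(u,v)\in\Sscr^{m-1}\by\Sscr^{n-1}$, so that $\sigma_{\max}(A)=\max_{u,v}X_{u,v}$ and $\sigma_{\min}(A)=\min_u\max_v X_{u,v}$, and compare it against the decoupled process $Y_{u,v}=\brk{g,u}+\brk{h,v}$ with independent $g\sim\Nscr(0,I_m)$, $h\sim\Nscr(0,I_n)$. A direct computation of increments gives $\Exp{(X_{u,v}-X_{\tilde u,\tilde v})^2}\le\Exp{(Y_{u,v}-Y_{\tilde u,\tilde v})^2}$ (the difference equals $2(1-\brk{u,\tilde u})(1-\brk{v,\tilde v})\ge0$), so Sudakov--Fernique yields $\Exp{\sigma_{\max}(A)}\le\Exp{\max_{u,v}Y_{u,v}}=\Exp{\enorm{g}}+\Exp{\enorm{h}}\le\sqrt m+\sqrt n$. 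For the smallest singular value, Gordon's min--max inequality applied to the same pair of processes gives $\Exp{\sigma_{\min}(A)}=\Exp{\min_u\max_v X_{u,v}}\ge\Exp{\min_u\max_v Y_{u,v}}=\Exp{\enorm{h}}-\Exp{\enorm{g}}$, and a standard estimate of $\Exp{\enorm{\Nscr(0,I_k)}}=\sqrt2\,\Gamma(\tfrac{k+1}2)/\Gamma(\tfrac k2)$ shows $\Exp{\enorm{h}}-\Exp{\enorm{g}}\ge\sqrt n-\sqrt m$. Hence $\Exp{\sigma_{\max}(A)/\sqrt n}\le 1+\sqrt{m/n}$ and $\Exp{\sigma_{\min}(A)/\sqrt n}\ge 1-\sqrt{m/n}$. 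Combining with the concentration step finishes the proof: the event $\set{\sigma_{\max}(A)/\sqrt n\ge 1+\sqrt{m/n}+t}$ is contained in $\set{\sigma_{\max}(A)/\sqrt n\ge\Exp{\sigma_{\max}(A)/\sqrt n}+t}$, and symmetrically for $\sigma_{\min}$, so the two displayed concentration bounds give the claimed inequalities directly.

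The routine parts are the Lipschitz/concentration step and the Sudakov--Fernique bound for $\sigma_{\max}$, which needs only increment comparison and no matching of variances. The delicate part, and where I expect to have to be most careful, is the lower bound on $\Exp{\sigma_{\min}(A)}$: it genuinely requires the min--max (rather than pure max) Gaussian comparison, so I must verify the correct covariance hypotheses of Gordon's inequality for the bilinear process $X$ against the decoupled process $Y$, paying attention to the fact that the required inequality flips between the ``same outer index'' and ``distinct outer index'' cases, and then convert $\Exp{\enorm{h}}-\Exp{\enorm{g}}$ into the clean edge $\sqrt n-\sqrt m$ through the gamma-function monotonicity of $k\mapsto\sqrt k-\Exp{\enorm{\Nscr(0,I_k)}}$.
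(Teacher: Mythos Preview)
The paper does not give its own proof of this lemma; it is stated as a quotation of Theorem~9.26 in the cited reference and is used as a black box in the proof of Proposition~\ref{prop:UU-bound}. Your proposal is the standard proof of that theorem (Gaussian concentration for the $1$-Lipschitz maps $A\mapsto\sigma_{\max}(A),\sigma_{\min}(A)$, combined with Sudakov--Fernique for the upper expectation bound and Gordon's min--max inequality for the lower one), and it is correct; in particular your identification of the delicate step---verifying Gordon's covariance hypotheses and then converting $\Exp{\enorm{h}}-\Exp{\enorm{g}}$ to $\sqrt n-\sqrt m$ via monotonicity of $k\mapsto\sqrt k-\Exp{\chi_k}$---is exactly where the work lies. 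There is nothing to compare against in the paper itself, so your write-up would serve as a self-contained substitute for the citation.
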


Proposition \ref{prop:UU-bound} provides a lower bound on the singular values of $UU'$. For completeness, in the next lemma we also provide an upper bound for $UU'$. 
\begin{lemma}
For every $\alpha, \delta \in (0,1)$, 
when $$N\Tf \gtrsim \frac{L}{\alpha^2} \lrp{mL + \ln\lrp{\frac{L}{\delta}}},$$
it holds with probability at least $1-\delta$ that
\begin{align}
(1-\alpha)^2 \sigma_u^2 N\Tf I_{m(L+1)} \prec UU' \preceq (1+\alpha)^2 \sigma_u^2 N\Tf I_{m(L+1)}.
\end{align}
\end{lemma}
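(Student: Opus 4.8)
The plan is to reuse verbatim the embedded Page-matrix decomposition from the proof of Proposition \ref{prop:UU-bound} — which already yields the lower bound — and to add the matching upper bound, which is its exact mirror image. Recall that $U$ is partitioned into $L+1$ submatrices $U_{\Jscr_k}$, $k\in[0{:}L]$, each of size $m(L+1)\by dN$ with $d=\floor{\Tf/(L+1)}$, and each having i.i.d.\ $\Nscr(0,\sigma_u^2)$ entries. Squaring the conclusion of Lemma \ref{lem:decompose} gives
\[
\sigma_{\min}^2(U)\ \geq\ \sum_{k=0}^L\sigma_{\min}^2(U_{\Jscr_k}),\qquad \sigma_{\max}^2(U)\ \leq\ \sum_{k=0}^L\sigma_{\max}^2(U_{\Jscr_k}),
\]
so it suffices to control, simultaneously, both extreme singular values of each Gaussian block.

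Next I would apply Lemma \ref{lem:gaussian-matrix-sv} to each $U_{\Jscr_k}/\sigma_u$ (an $m(L+1)\by dN$ matrix with $dN>m(L+1)$, which is guaranteed since the hypothesis forces $N\Tf\gtrsim mL^2$ and hence $dN\approx N\Tf/(L+1)\gtrsim m(L+1)$), using \emph{both} tail bounds with $t$ chosen so that $\sqrt{m(L+1)/(dN)}+t\leq\alpha$; this is possible as soon as $dN\gtrsim \alpha^{-2}\big(m(L+1)+\ln((L+1)/\delta)\big)$, i.e., using $d(L+1)=\Tf$ up to the truncation, as soon as $N\Tf\gtrsim \alpha^{-2}L\big(mL+\ln(L/\delta)\big)$. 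Taking a union bound over the $2(L+1)$ tail events, with probability at least $1-\delta$ every block satisfies simultaneously
\[
(1-\alpha)^2\sigma_u^2\,dN\ \leq\ \sigma_{\min}^2(U_{\Jscr_k})\ \leq\ \sigma_{\max}^2(U_{\Jscr_k})\ \leq\ (1+\alpha)^2\sigma_u^2\,dN .
\]
Summing over $k$ and using $\sum_{k=0}^L dN=(L+1)dN=N\Tf$ yields $(1-\alpha)^2\sigma_u^2 N\Tf\leq\sigma_{\min}^2(U)$ and $\sigma_{\max}^2(U)\leq(1+\alpha)^2\sigma_u^2 N\Tf$. Since $U$ has $m(L+1)$ rows, $UU'\in\R^{m(L+1)\by m(L+1)}$ with $\sigma_{\min}^2(U)=\lambda_{\min}(UU')$ and $\sigma_{\max}^2(U)=\lambda_{\max}(UU')$, which is exactly the stated two-sided operator inequality; the left inequality may be taken strict because $UU'$ is almost surely positive definite (as noted in Section \ref{sec:main-proof}) with an absolutely continuously distributed minimum eigenvalue, so it a.s.\ does not equal the fixed value $(1-\alpha)^2\sigma_u^2 N\Tf$.

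There is no genuine obstacle here beyond bookkeeping: the lower-bound half is literally Proposition \ref{prop:UU-bound}, and the upper-bound half is obtained by replacing the ``$1-\sqrt{\cdot}-t$'' tail with the ``$1+\sqrt{\cdot}+t$'' tail of Lemma \ref{lem:gaussian-matrix-sv}. The only mildly delicate point is divisibility of $\Tf$ by $L+1$; as in Proposition \ref{prop:UU-bound} one truncates $\Tf$ down to the nearest multiple of $L+1$, which costs at most a factor of two in the effective sample count and is absorbed into the $\gtrsim$ in the hypothesis. Equivalently, one could simply cite the bound on $\sigma_{\min}(UU')$ from Proposition \ref{prop:UU-bound}, run the $\sigma_{\max}$ union bound on the same $L+1$ Gaussian blocks, and intersect the two high-probability events.
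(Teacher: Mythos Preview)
Your proposal is correct and follows essentially the same route as the paper: partition $U$ into the $L+1$ embedded Page matrices $U_{\Jscr_k}$, apply both tails of Lemma \ref{lem:gaussian-matrix-sv} to each block with a union bound, and combine via Lemma \ref{lem:decompose} to get the two-sided bound on $\sigma_{\min}^2(U)$ and $\sigma_{\max}^2(U)$. The paper's own proof is slightly terser (and does not separately justify the strictness of the left inequality), but the argument is the same.
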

\begin{proof}
For ease of notation let $d = \Tf/(L+1)$. 
Similar to the proof of Proposition \ref{prop:UU-bound}, since for each $m(L+1) \by Nd$ submatrix $U_{\Jscr_j}$ in the partition of the columns of $U$  consists of iid zero-mean $\sigma_u^2$-subgaussian random variables, Lemma \ref{lem:gaussian-matrix-sv} implies that when $Nd > m(L+1)$, for every $j \in [L]$, for every $\delta \in (0,1)$, with probability at least $1-\delta$, 
\begin{align}
\frac{\sigma_{\max}(U_{\Jscr_j})}{\sigma_u \sqrt{Nd}} - 1 \leq \sqrt{\frac{m(L+1)}{dN}} + \sqrt{\frac{2}{d} \ln \lrp{\frac{2}{\delta}}}\xpln{and}\\
\1-\frac{\sigma_{\min}(U_{\Jscr_j})}{\sigma_u \sqrt{dN}} \leq \sqrt{\frac{m(L+1)}{dN}} + \sqrt{\frac{2}{dN}\ln\lrp{\frac{2}{\delta}}}.
\end{align}
Let $\alpha \in (0,1)$. Taking a union bound over all partition elements $j \in [L]$, we have that when 
\begin{align}
dN > \frac{2}{\alpha^2}\lrp{m(L+1) + \frac{2}{dN}{\ln\lrp{\frac{2L}{\delta}}}}, 
\end{align}
with probability at least $1-\delta$, for all $j \in [L]$, 
\begin{align}
(1-\alpha) \sigma_u \sqrt{dN} \leq \sigma_{\min}(U_{\Jscr_j}) \leq \sigma_{\max}(U_{\Jscr_j}) \leq \sigma_u \sqrt{dN}(1+\alpha).
\end{align}
Combining the bounds  for all $U_{\Jscr_j}$ with Lemma \ref{lem:gaussian-matrix-sv} gives us the claimed result. 
\end{proof}

\subsection{Bounding \texorpdfstring{$WU'$}{WU'}}\label{sec:proof:WU-bound}

\begin{proof}[Proof of Proposition \ref{prop:WU-bound}]
Splitting the terms of $WU'$ based on the partition $\set{\Jscr_j \mid j \in [L]}$ of the columns of $U$ and $W$ and using the triangle inequality, we have $\opnorm{WU'} \leq \sum_{j =1}^L \opnorm{W_{\Jscr_j} U_{\Jscr_j}'}$. Each summand is a product of two independent $(p+Ln)\by dN$ and $m(L+1)\by dN$ matrices each with iid zero-mean subgaussian entries, where we set $d \coloneq \Tf/(L+1)$. 

Then by Lemma \ref{lem:rando-concentration} and a union bound over $j \in [L]$, for any $\delta > 0$, with probability at least $1-\delta$, 
\begin{align}
\opnorm{WU'}  \lsim  \sigma_w \sigma_u  NdL \max \lrp{\sqrt{\frac{\ln( 9^{p + m + L(n+m)}L/\delta)}{Nd}}, \frac{\ln( 9^{p + m + L(n+m)}L/\delta)}{Nd}}.
\end{align}
If $Nd \geq \ln( 9^{p + m + L(n+m)}L/\delta)$ then the result simplifies to 
\begin{align}
\Pr{\opnorm{WU'}  \lsim  \sigma_w \sigma_u  NdL \sqrt{\frac{\ln( 9^{p + m + L(n+m)}L/\delta)}{Nd}}} \geq 1-\delta.
\end{align}
Recalling that $Nd = N\Tf/(L+1)$ gives the result. 
\end{proof}

\begin{lemma}\label{lem:rando-concentration}
Let $A_{N \by M_1}$ and $B_{N \by M_2}$ be independent random matrices with iid zero-mean subgaussian entries with variance proxies $\sigma_a^2$ and $\sigma_b^2$, respectively. 
For every $\delta > 0$, with probability at least $1-\delta$, 
\begin{align}
\opnorm{A'B} \lsim \sigma_a \sigma_b N \max \lrp{\sqrt{\frac{\ln( 9^{M_1 + M_2}/\delta)}{N}}, \frac{\ln( 9^{M_1 + M_2}/\delta)}{N}}.
\end{align}
\end{lemma}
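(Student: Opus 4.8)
The plan is to prove Lemma~\ref{lem:rando-concentration} by a standard $\varepsilon$-net argument reducing the operator norm of $A'B$ to a supremum of scalar quadratic forms, each of which is a sum of independent products of subgaussian random variables that can be controlled by a Bernstein-type inequality. First I would write $\opnorm{A'B} = \sup_{x \in \Sscr^{M_1-1}, y \in \Sscr^{M_2-1}} x' A' B y$, and replace the two unit spheres by $1/4$-nets $\Nscr_1 \subset \Sscr^{M_1-1}$ and $\Nscr_2 \subset \Sscr^{M_2-1}$ of sizes at most $9^{M_1}$ and $9^{M_2}$ respectively, using the standard covering-number bound and the fact that passing to the net costs at most a constant factor (here a factor $2$, since $\opnorm{A'B} \le 2 \max_{x \in \Nscr_1, y \in \Nscr_2} x'A'By$ for a $1/4$-net). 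This is where the $9^{M_1 + M_2}$ term in the logarithm comes from after the union bound.

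Next I would fix $x \in \Nscr_1$, $y \in \Nscr_2$ and analyze the scalar $x'A'By = \sum_{k=1}^N (a_k'x)(b_k'y)$, where $a_k', b_k'$ denote the $k$th rows of $A$ and $B$. For each $k$, $a_k'x$ is zero-mean $\sigma_a$-subgaussian (a linear combination of independent subgaussians with unit-norm coefficient vector) and similarly $b_k'y$ is $\sigma_b$-subgaussian, and the pairs across $k$ are independent. Hence each product $(a_k'x)(b_k'y)$ is a zero-mean subexponential random variable with $\psi_1$-norm $\lsim \sigma_a \sigma_b$, and the sum of $N$ independent such variables obeys Bernstein's inequality: for every $t \ge 0$,
\begin{align}
\Pr{\abs{x'A'By} \ge t} \le 2 \exp\lrp{-c \min\lrp{\frac{t^2}{N \sigma_a^2 \sigma_b^2}, \frac{t}{\sigma_a \sigma_b}}}.
\end{align}
Setting $t = C \sigma_a \sigma_b \max\lrp{\sqrt{N \ln(9^{M_1+M_2}/\delta)},\ \ln(9^{M_1+M_2}/\delta)}$ makes the right-hand side at most $\delta / (9^{M_1+M_2})$, and a union bound over the at most $9^{M_1} \cdot 9^{M_2} = 9^{M_1+M_2}$ pairs $(x,y) \in \Nscr_1 \by \Nscr_2$ gives that with probability at least $1-\delta$, $\max_{x,y} \abs{x'A'By}$ is bounded by $t$; multiplying by the net-to-sphere factor of $2$ and writing $\max(\sqrt{N \ln(\cdot)}, \ln(\cdot)) = N \max\lrp{\sqrt{\ln(\cdot)/N}, \ln(\cdot)/N}$ yields exactly the claimed bound.

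The main obstacle — more a technical care point than a deep difficulty — is handling the subexponential tail of the products $(a_k'x)(b_k'y)$ correctly: unlike a purely subgaussian sum, the product of two subgaussians is only subexponential, so one must use the two-regime Bernstein inequality rather than a sub-Gaussian Hoeffding bound, which is precisely why the final bound has the $\max$ of a $\sqrt{\cdot}$ term and a linear term. One should either cite a standard statement (e.g.\ Vershynin's product-of-subgaussians lemma together with Bernstein for subexponential sums) or verify the Orlicz-norm bound $\norm{(a_k'x)(b_k'y)}_{\psi_1} \le \norm{a_k'x}_{\psi_2}\norm{b_k'y}_{\psi_2} \lsim \sigma_a\sigma_b$ directly via Cauchy--Schwarz on moment generating functions. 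A minor secondary point is confirming that the net cardinality bound $\abs{\Nscr} \le (1 + 2/\varepsilon)^{M} = 9^M$ holds for $\varepsilon = 1/4$ and that the associated approximation factor is absorbed into the universal constant; both are routine.
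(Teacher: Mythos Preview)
Your proposal is correct and follows essentially the same approach as the paper: a $1/4$-net reduction on both unit spheres (with the factor of $2$ from the net-to-sphere approximation and the $9^{M_1+M_2}$ covering bound), followed by the observation that each term $(a_k'x)(b_k'y)$ is subexponential with $\psi_1$-norm $\lsim \sigma_a\sigma_b$, Bernstein's inequality, and a union bound over the net pairs. The paper cites the same Vershynin references you allude to (Corollary~4.2.13, Exercise~4.4.3, Lemma~2.7.7, Theorem~2.8.1), so even the technical backing matches.
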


\begin{proof}
By \cite[Corollary 4.2.13]{vershynin2018high}, there exists a $1/4$-covering in the Euclidean norm, $\Cscr_1$, of the unit sphere $\Sscr^{M_1-1}$, and likewise a $1/4$-covering $\Cscr_2$ of $\Sscr^{M_2 -1}$, such that $|\Cscr_i| \leq 9^{M_i}$ for $i \in \set{1,2}$. 

We proceed by bounding $\brk{Aa, Bb}$ for all $(a,b) \in \Cscr_1 \by \Cscr_2$, and then transfer this bound to one on the operator norm using the following covering lemma \cite[Exercise 4.4.3]{vershynin2018high}:
\begin{align}\label{eq:opnorm-cover}
\opnorm{A'B} \leq 2 \sup_{a \in \Cscr_1, b \in \Cscr_2} \brk{Aa, Bb}.
\end{align}

For any $(a, b) \in \Cscr_1 \by \Cscr_2$, $Aa$ and $Bb$ are subgaussian random vectors with independent entries, and $\brk{Aa, Bb}$ is the sum of $N$ independent zero-mean subexponential random variables \cite[Lemma 2.7.7]{vershynin2018high} with parameter at most $8/3 \sigma_a \sigma_b$. Using Bernstein's inequality \cite[Theorem 2.8.1]{vershynin2018high} with a union bound over the covering, we have that for all $t \geq 0$, 
\begin{align}
\Pr{\sup_{(a,b) \in \Cscr_1 \by \Cscr_2} \brk{Aa, Bb} \geq \sigma_a \sigma_b N t} \leq 2 \abs{\Cscr_1}\abs{\Cscr_2} \exp \lrb{-c \min \lrp{\frac{t^2}{\sigma_a^2 \sigma_b^2}, \frac{t}{\sigma_a \sigma_b}} N}.
\end{align}
Then using \eqref{eq:opnorm-cover} and $\abs{\Cscr_i} \leq 9^{M_i}$ for $i = 1, 2$, we transfer this bound to the operator norm: for every $t>0$, 
\begin{align}
\Pr{\opnorm{A'B} \geq 2\sigma_a \sigma_b N t} \leq  2\cdot 9^{M_1 + M_2} \exp \lrp{-c \min\lrp{t^2, t} N}.
\end{align}
Letting $\delta = 2\cdot 9^{M_1 + M_2} \exp(-c\min(t^2,t)N)$, or equivalently, $t = c\max \lrp{\sqrt{\ln(2\cdot 9^{M_1 + M_2}/\delta)/N}, \ln(2\cdot 9^{M_1 + M_2}/\delta)/N}$, then we have
\begin{align}
\Pr{\opnorm{A'B} \gsim \sigma_a \sigma_b N \max \lrp{\sqrt{\ln( 9^{M_1 + M_2}/\delta)/N}, \ln( 9^{M_1 + M_2}/\delta)/N} } \leq \delta.
\end{align}
\end{proof}

\begin{lemma}\label{lem:innerprod}
Let $A$ and $B$ be $N\by M_1$ and $N \by M_2$ Gaussian random matrices with iid standard normal entries. For any $a \in \Sscr^{M_1-1}, b \in \Sscr^{M_2 -1}$, and for any $t>0$, 
\begin{align}
\Pr{\brk{Aa, Bb} \geq N t} \leq 2\exp \lrb{-c\min \lrp{t^2,t}N} 
\end{align}
where $c>0$ is an absolute constant.
\end{lemma}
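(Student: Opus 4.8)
The plan is to reduce the claim to a concentration inequality for the inner product of two \emph{independent} standard Gaussian vectors in $\R^N$, and then apply a sub-exponential Bernstein inequality. First I would observe that, since $a \in \Sscr^{M_1-1}$ is a unit vector and $A$ has i.i.d.\ $\Nscr(0,1)$ entries, the vector $g \coloneq Aa$ has i.i.d.\ $\Nscr(0,1)$ coordinates; likewise $h \coloneq Bb$ has i.i.d.\ $\Nscr(0,1)$ coordinates, and $g \indep h$ because $A \indep B$. Hence $\brk{Aa, Bb} = \sum_{i=1}^N g_i h_i$ is a sum of $N$ i.i.d.\ mean-zero random variables.

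Second, I would record the tail behavior of each summand. Each $g_i h_i$ is a product of independent $\Nscr(0,1)$ variables, hence sub-exponential with $\norm{g_i h_i}_{\psi_1} \le \norm{g_i}_{\psi_2}\norm{h_i}_{\psi_2} \le K$ for a universal constant $K$ (see, e.g., \cite[Lemma 2.7.7]{vershynin2018high}), and $\E[g_i h_i] = \E[g_i]\E[h_i] = 0$. Alternatively, one may bypass sub-exponential norms via the polarization identity $\brk{g,h} = \tfrac14\lrp{\enorm{g+h}^2 - \enorm{g-h}^2}$, where $g+h$ and $g-h$ are independent $\Nscr(0, 2 I_N)$ vectors (jointly Gaussian with zero cross-covariance), and apply a $\chi^2_N$ tail bound, e.g.\ Laurent--Massart, to each of the two terms.

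Third, I would invoke Bernstein's inequality for sums of independent sub-exponential random variables \cite[Theorem 2.8.1]{vershynin2018high}: for an absolute constant $c>0$,
\begin{align}
\Pr{\abs{\sum_{i=1}^N g_i h_i} \geq s} \leq 2\exp\lrb{-c\min\lrp{\frac{s^2}{K^2 N}, \frac{s}{K}}}, \qquad s \geq 0.
\end{align}
Setting $s = Nt$ makes the exponent $-cN\min\lrp{t^2/K^2,\, t/K}$, which, since $K$ is an absolute constant, is at most $-c'N\min(t^2,t)$ for another absolute $c'>0$; combined with $\Pr{\brk{Aa,Bb}\geq Nt}\leq \Pr{\abs{\brk{g,h}}\geq Nt}$, this yields the claimed one-sided bound.

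I do not anticipate a genuine obstacle here. The two points requiring care are: (i) the reduction in the first step really does use $A \indep B$, since otherwise $\brk{Aa,Bb}$ need not be mean zero (taking $A=B$, $a=b$ gives $\brk{Aa,Bb}=\enorm{Aa}^2$, which concentrates near $N$), so the stated bound would be false; and (ii) preserving the two-regime exponent $\min(t^2,t)N$ — the sub-Gaussian behavior for $t \lesssim 1$ and the heavier sub-exponential tail for $t \gtrsim 1$ — which is exactly the crossover built into Bernstein's inequality and must not be collapsed to a single regime.
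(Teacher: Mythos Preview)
Your proposal is correct and follows essentially the same route as the paper: reduce $Aa$ and $Bb$ to independent standard Gaussian vectors in $\R^N$, observe that each coordinate product is sub-exponential via \cite[Lemma 2.7.7]{vershynin2018high}, and apply Bernstein's inequality \cite[Theorem 2.8.1]{vershynin2018high}. The extra remarks you include (the polarization alternative and the need for $A\indep B$) are sound but not needed for the argument.
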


\begin{proof}
First, note that for any fixed $a \in \Sscr^{M_1-1}, b \in \Sscr^{M_2 -1}$, $Aa$ and $Bb$ are independent $N$-dimensional vectors whose entries are independent standard normal random variables. 
Since the product of two sub-gaussian variables (c.f. \cite[Definition 2.5.6]{vershynin2018high}) is sub-exponential \cite[Lemma 2.7.7]{vershynin2018high}, the inner product $\brk{Aa, Bb}$ is a sum of $N$ sub-exponential RVs each with parameter at most 8/3. 
Using Bernstein's inequality \cite[Theorem 2.8.1]{vershynin2018high} yields the result. 
\end{proof}

\subsection{Bounding \texorpdfstring{$EU'$}{EU'}}\label{sec:proof:EU-bound}
\begin{proof}[Proof of Proposition \ref{prop:EU-bound}]
As in the proofs of Propositions \ref{prop:UU-bound} and \ref{prop:WU-bound}, we partition the $N\Tf$ columns of $U$ by the $L+1$ index sets $\set{\Jscr_j \mid j \in [0{:}L]}$ such that each submatrix $U_{\Jscr_j}$ consisting of the columns of $U$ indexed by $\Jscr_j$ is a $m(L+1)\by Nd$ Page matrix with iid $\Nscr(0, \sigma_u^2)$ entries, where $d = \floor{\Tf/(L+1)}$. 
Recall $\Tf = \Tbar - L + 1$ and for simplicity assume that $\Tf$ is divisible by $L+1$. Partitioning $E$ in the same way with each $E_{\Jscr_j}$ a $p \by Nd$ matrix, we can decompose
\begin{align}\label{eq:decomposeEU}
EU' = \sum_{j=0}^{L}E_{\Jscr_j} U_{\Jscr_j}'
\end{align}
and by the triangle inequality, it suffices to upper bound $\big\lVert{E_{\Jscr_j} U_{\Jscr_j}'}\big\rVert_2$ for each $j \in [0{:}L]$ in order to upper bound $\opnorm{EU'}$.

Fix a $j \in [0{:}L]$. We want to bound $\big\lVert{E_{\Jscr_j} U_{\Jscr_j}'}\big\rVert_2$. Let $\Cscr_1, \Cscr_2$ be 1/4-covers of $\Sscr^{p-1}, \Sscr^{m(L+1)}$, respectively, such that $\abs{\Cscr_1} \leq 9^p$ and $\abs{\Cscr_2} \leq 9^{m(L+1)}$ and 
\begin{align}
\opnorm{E_{\Jscr_j} U_{\Jscr_j}'} \leq 2\sup_{a \in \Cscr_1, b \in \Cscr_2} a'E_{\Jscr_j}U_{\Jscr_j}'b.
\end{align}
Define a mapping of pairs of indices indicating the partition $j\in [0{:}L]$ and element $k\in [d]$ of the partition to linear indices $\tau: (j,k) \mapsto k(L+1)+(j-1)\in [T]$. 
For a fixed $(a, b) \in \Cscr_1 \by \Cscr_2$, define $Z_{i,k} = \brk{a, CA^L x_{i,\tau(j,k-1)}}$ and $W_{i,k} = \brk{\ubar_{i, \tau(j,k)}, b}$ so that
\begin{align}
a'E_{\Jscr_j}U_{\Jscr_j}'b = \sum_{i=1}^N \sum_{k=1}^d Z_{i,k} W_{i,k}.
\end{align}
From Lemma \ref{lem:CAx-bound}, for a given $i \in [N], k \in [d]$ and $\delta \in (0, 1/e)$, with probability at least $1-\delta$, 
\begin{align}\label{eq:eu-step1}
	Z_{i,k}^2 \leq \enorm{CA^L x_{i, \tau(j, k-1)}}^2 \lsim \opnorm{CA^L}^2\opnorm{\Gamma_{\infty}} \ln(1/\delta) = \beta,
\end{align}
where we let $\beta \coloneq \opnorm{CA^L}^2\opnorm{\Gamma_{\infty}} \ln(1/\delta)$. Let us also define
$\alpha \coloneq \sqrt{2Nd\beta \sigma_u^2\ln(1/(Nd\delta))}$. %

Define the filtration with sigma algebras $\Fscr_{i,k}$ generated by $\set{x_{i,t}, u_{i,t} \mid t \leq \tau(j,k)}$. Note that $Z_{i,k}$ is adapted to $\Fscr_{i,k-1}$ and $W_{ik}$ is adapted to $\Fscr_{i,k}$, and $W_{i,k} \mid \Fscr_{i,k-1}$ is $\sigma_u^2$-subgaussian. 
By a martingale concentration lemma \cite[Lemma 4.4]{oymak2021revisiting} plus a union bound of \eqref{eq:eu-step1} over $i \in [N], k \in [d]$, we have that
\begin{align}
\Pr{\sum_{i,k} Z_{ik}W_{ik} \geq \alpha} &\leq \Pr{\sum_{i,k} Z_{i,k}^2 \leq Nd\beta} + \exp\lrp{-\frac{\alpha^2}{2\sigma_u^2 N d \beta}} \\
&\leq 2Nd\delta.
\end{align}
Plugging in $\alpha$ and $\beta$, we have that 
\begin{align}
	\Pr{\sum_{i,k} Z_{i,k} W_{i,k} \geq \sqrt{2Nd \sigma_u^2 \opnorm{CA^L}^2 \opnorm{\Gamma_{\infty}} \ln\lrp{\frac{1}{\delta}}\ln\lrp{\frac{1}{Nd\delta}}}} \leq 2Nd\delta.
\end{align}
Taking a union bound over all $(a,b) \in \Cscr_1 \by \Cscr_2$ and over all partitions $j \in [0{:}L]$, and plugging in $d = \Tf/(L+1)\leq T/L$, 
we have that with probability at least $1-\delta$, with $\delta < 1/e$,
\begin{align}
\opnorm{EU} &\lsim L \sqrt{Nd  \opnorm{CA^L}^2 \opnorm{\Gamma_{\infty}} \sigma_u^2 \ln\lrp{\frac{NdL9^{p+mL}}{\delta}} \ln \lrp{\frac{L9^{p+mL}}{\delta}}} \\
&\lsim \sqrt{NTL\opnorm{CA^L}^2\opnorm{\Gamma_{\infty}} \sigma_u^2 \lrp{p+mL+\ln\lrp{\frac{NT}{\delta}}}^2}.
\end{align}
\end{proof}

\begin{lemma}\label{lem:CAx-bound}
For a given $i \in [N], t \in [T]$, for any $\delta \in (0,1/e)$, 
\begin{align}
	\Pr{\enorm{CA^Lx_{i,t}}^2 \gsim \opnorm{CA^L}^2 \opnorm{\Gamma_{\infty}} \ln(1/\delta)} \leq 2\delta.
\end{align}
\end{lemma}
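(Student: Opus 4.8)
The plan is to realize $CA^{L}x_{i,t}$ as a fixed linear image of an independent subgaussian random vector and then control a single one‑dimensional projection of it. First I would unroll the state recursion from the zero initial condition $x_{i,0}=0$, writing
\begin{equation}
x_{i,t} \;=\; \sum_{s=0}^{t-1} A^{s}\big(B\,u_{i,t-1-s} + w^{(1)}_{i,t-1-s}\big),
\end{equation}
so that $CA^{L}x_{i,t}$ is a deterministic linear function of the stacked, mutually independent, mean‑zero coordinates of $u_{i,0},\dots,u_{i,t-1}$ (each $\sigma_u^2$‑subgaussian) and $w^{(1)}_{i,0},\dots,w^{(1)}_{i,t-1}$ (each $\sigma_{w^{(1)}}^2$‑subgaussian with $\sigma_{w^{(1)}}\le\sigma_w$). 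In particular $CA^{L}x_{i,t}$ is a mean‑zero subgaussian random vector, and the whole argument reduces to controlling its directional variance proxy.

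The key step is the following bookkeeping. For a unit vector $a\in\Sscr^{p-1}$,
\[
\brk{a,\,CA^{L}x_{i,t}} \;=\; \sum_{s=0}^{t-1}\Big[(a'CA^{L+s}B)\,u_{i,t-1-s} + (a'CA^{L+s})\,w^{(1)}_{i,t-1-s}\Big],
\]
and since the summands are independent with the stated proxies, the subgaussian parameter of $\brk{a,CA^{L}x_{i,t}}$ is at most
\begin{align}
&\sum_{s=0}^{t-1}\Big(\sigma_u^2\,\enorm{B'(A')^{L+s}C'a}^2 + \sigma_w^2\,\enorm{(A')^{L+s}C'a}^2\Big)\\
&\qquad = a'CA^{L}\bigg(\sum_{s=0}^{t-1}\big[\sigma_u^2 A^{s}BB'(A')^{s}+\sigma_w^2 A^{s}(A')^{s}\big]\bigg)(A')^{L}C'a \;\le\; \opnorm{CA^L}^2\opnorm{\Gamma_{\infty}}.
\end{align}
Here the finite sum is PSD‑dominated by the full series defining $\Gamma_\infty(M)$ (all terms are positive semidefinite), and then $a'CA^{L}\,\Gamma_\infty\,(A')^{L}C'a \le \opnorm{\Gamma_\infty}\,\enorm{a'CA^L}^2 \le \opnorm{\Gamma_\infty}\opnorm{CA^L}^2$. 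This cleanly factoring $A^{L}$ out of the truncated state covariance and comparing it with $\Gamma_\infty(M)$ is the part I expect to require the most care.

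With this uniform proxy bound in hand, the conclusion follows from a scalar subgaussian tail bound: for any $\delta\in(0,1)$,
\[
\Pr{\brk{a,\,CA^{L}x_{i,t}}^{2} > 2\opnorm{CA^L}^2\opnorm{\Gamma_{\infty}}\ln(1/\delta)}\;\le\;2\delta ,
\]
which is exactly the quantity $Z_{i,k}^2$ bounded in \eqref{eq:eu-step1} and what is needed when this lemma is applied to each direction of the covering net in the proof of Proposition~\ref{prop:EU-bound}. A bound on the full Euclidean norm $\enorm{CA^{L}x_{i,t}}^{2}=\sum_{l=1}^{p}\brk{e_l,CA^{L}x_{i,t}}^{2}$ follows from the same proxy bound via Hanson–Wright (or a coordinatewise application of Bernstein's inequality), at the cost of an additional additive dimension term $\min(p,n)$ inside the parenthesis. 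Besides the Gramian comparison above, the only other mild subtlety is that the process and measurement noise are merely subgaussian rather than Gaussian, so one invokes subgaussian concentration rather than exact $\chi^2$ tail bounds; everything else is routine.
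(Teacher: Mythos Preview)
Your argument is correct and shares the same core with the paper's proof: unroll $x_{i,t}$ from the zero initial condition, factor out $CA^{L}$, and dominate the truncated Gramian $\sum_{s=0}^{t-1}\big[\sigma_u^2 A^{s}BB'(A')^{s}+\sigma_w^2 A^{s}(A')^{s}\big]$ by $\Gamma_{\infty}$. Where you diverge is in the concentration step. The paper writes $x_{i,t}=H_t z_{i,t}$ with $z_{i,t}$ a unit-proxy subgaussian vector, bounds $\Fnorm{CA^LH_t}^2\le p\,\opnorm{CA^L}^2\opnorm{\Gamma_\infty}$, and applies Hanson--Wright directly to the quadratic form $\enorm{CA^Lx_{i,t}}^2$. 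You instead bound the one-dimensional projection $\brk{a,CA^Lx_{i,t}}$ via a scalar subgaussian tail, observing (correctly) that this already controls the quantity $Z_{i,k}^2$ actually used in \eqref{eq:eu-step1}, so Hanson--Wright is unnecessary for the downstream application in Proposition~\ref{prop:EU-bound}. Your route is more elementary and sidesteps the dimension factor $p$ that enters the paper's intermediate bound; the paper's route establishes the lemma exactly as stated (on the full norm) in one shot. Your remark that the full-norm version via Hanson--Wright naturally picks up an additive $\min(p,n)$ term is accurate and in fact mirrors what the paper's own calculation produces before its last simplification.
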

\begin{proof}
For every $i \in [N], t \in [T]$, we can write the state at time $t$ (under a fixed state space representation) as a sum of independent zero-mean subgaussian random variables:
\begin{align}
x_{i,t} &= \sum_{s=1}^t A^{s-1}\lrp{Bu_{i,t-s} + w\u{1}_{i,t-s}}, \\
&= H_t z_{i,t},
\end{align}
where we define 
\begin{align}
H_t &= \lrb{\sigma_u[B,\ AB,\ \cdots,\ A^{t-1}B],\ \sigma_w[I_n,\ A,\ \cdots,\ A^{t-1}]}\\
z_{it}' &= \lrb{u_{i,t-1}/\sigma_u,\ \cdots ,\ u_{i,0}/\sigma_u,\ w\u{1}_{i,t-1}/\sigma_w ,\ \cdots,\ w\u{1}_{i,0}/\sigma_w}.
\end{align} 
Note that $z_{i,t}$ is a zero-mean subgaussian random vector with variance proxy $1$. 
Now $CA^Lx_{i,t} = CA^L H_t z_{i,t}$, and assuming $p \leq (m+n)t$, $\Fnorm{CA^LH_t}^2 \leq p\opnorm{CA^LH_t}^2 \leq p\opnorm{CA^L}^2 \opnorm{H_t}^2$. 

Note that 
\begin{align}
\opnorm{H_t}^2 &= \opnorm{H_t H_t'} \\
&= \opnorm{\sum_{k=0}^{t-1}\sigma_u^2 A^kB(A^kB)' + \sigma_w^2 A^k (A^k)'}\\
&\leq \opnorm{\Gamma_{\infty}},
\end{align}
so that $\Fnorm{CA^LH_t}^2 \leq p\opnorm{CA^L}^2\opnorm{\Gamma_{\infty}}$. By the Hanson-Wright inequality \cite[Theorem 6.3.2]{vershynin2018high}, 
\begin{align}
	\Pr{\enorm{CA^Lx_{i,t}}^2 \geq (1+\eps) \Fnorm{CA^L H_t}^2}\leq 2\exp\lrp{-c(\min(\eps^2, \eps))p}.
\end{align}
Letting $\eps = c\max(\sqrt{\ln(1/\delta)}, \ln(1/\delta))/p$, and using that $1+\max(\sqrt{\ln(1/\delta)}, \ln(1/\delta)) \leq 2\ln(1/\delta)$ for $\delta \in (0,1/e)$, we have 
\begin{align}
	\Pr{\enorm{CA^Lx_{i,t}}^2 \gsim \opnorm{CA^L}^2 \opnorm{\Gamma_{\infty}} \ln(1/\delta)} \leq 2\delta.
\end{align}
\end{proof}

\subsection{End-to-end analysis}\label{sec:proof:end2end}
\begin{proof} [Proof of Proposition \ref{prop:end2end}]
We trace parameter estimation errors through the four main steps of \cref{alg:sysid}.
\paragraph{Single trajectory estimation}
Given a single trajectory for system $i \in [\Ntot]$, the least squares estimator of the matrix of Markov parameters $G\u{L_1}_i$ is given by
\begin{align}
  \Ghat\u{L_1}_i = \amin_{G \in \R^{p\by (L+1)m}} \Fnorm{Y_i - GU\u{L_1}_i}^2 = Y_i \pseudo{U\u{L_1}_i}.
\end{align}
As an special case of Theorem \ref{thm:MAIN-RESULT} with $N=1$, when the trajectory length $\Tbar$ satisfies \eqref{eq:first-T}, the estimation error of the first $L_1+1$ Markov parameters satisfies
\begin{align}
  \Pr{\opnorm{\Ghat_i\u{L_1} - G_i\u{L_1}} \geq \Delta/(2\sqrt{p})} \leq \delta.
\end{align}
\paragraph{Clustering}
Converting the spectral norm of the estimation error of the first $L_1+1$ Markov parameters to a Frobenius norm bound, we incurring a factor of $\sqrt{p}$ and guarantee that with probability at least $1-\delta$
\begin{align}
   & \Fnorm{\Ghat_i\u{L_1} - G_i\u{L_1}} \leq \sqrt{p} \opnorm{\Ghat_i\u{L_1} - G_i\u{L_1}} \leq \Delta/2.
\end{align}
When the $K$ models in $M$ are separated in the Frobenius norm of the first $(L_1+1)$ Markov parameters of each model by a distance at least $\Delta$, which is twice the estimation error bound, the global minimizer of the sum of squared errors objective of $k$-means, based on the estimated Markov parameters, matches the true cluster assignments of the observed trajectories. As mentioned in the main text, we assume that $k$-means is well-initialized and converges to this minimum, although in general $k$-means is not guaranteed to find the global minimum. In practice, since the clusters are well-separated, running $k$-means multiple times with well-chosen initializations (e.g., $k$-means++) should suffice.

\paragraph{Multi-trajectory estimation}
With correct clustering, we have $N$ trajectories in each cluster.
Under condition \eqref{eq:second-NT} the estimation error for $\Ghat\u{L_2}$ within each cluster is bounded by $\eps$ with probability at least $1-\delta$.
\paragraph{Ho-Kalman realization}
Finally, we transfer the estimation error bounds for Markov parameters to bounds on estimating a realization by using a result on the robustness of the Ho-Kalman algorithm from \cite{oymak2021revisiting}, which is tailored to our setting and stated in Proposition \ref{prop:ho-kalman}. %
Note that this step requires estimating at least $L_2 \geq 2n+1$ impulse response parameters.
\end{proof}
\end{document}